\definecolor{ForestGreen}{rgb}{0.1333,0.5451,0.1333}
\definecolor{DarkRed}{rgb}{0.80,0,0}
\definecolor{Red}{rgb}{1,0,0}
\theoremstyle{plain}
\newtheorem{theorem}{Theorem}[section]
\newtheorem{lemma}[theorem]{Lemma}
\newtheorem{corollary}[theorem]{Corollary}
\theoremstyle{definition}
\newtheorem{definition}[theorem]{Definition}
\newcommand{\R}{\mathbb{R}}
\newcommand{\E}{\mathbb{E}}
\newcommand{\bdf}{\boldsymbol{f}}
\newcommand{\nb}[1]{{\color{black} #1}}
\newcommand{\nbr}[1]{{\color{black} #1}}
\title{The Incentive Guarantees Behind Nash Welfare in Divisible Resources Allocation\thanks{A preliminary version of this paper is published in WINE'23. New results are added in this version, and are presented in \Cref{sec:2-agents}.}}
\author{
    Xiaohui Bei \\
    Nanyang Technological University \\
    \texttt{xhbei@ntu.edu.sg} \\
    \and
    Biaoshuai Tao \\
    Shanghai Jiao Tong University \\
    \texttt{bstao@sjtu.edu.cn} \\
    \and
    Jiajun Wu \\
    Huawei Technologies \\
    \texttt{wujiajun17@huawei.com} \\
    \and
    Mingwei Yang \\
    Stanford University \\
    \texttt{mwyang@stanford.edu}
}
\date{}
\begin{document}
\maketitle
\date{}

\begin{abstract}
    We study the problem of allocating divisible resources among $n$ agents, hopefully in a fair and efficient manner.
    With the presence of strategic agents, additional incentive guarantees are also necessary, and the problem of designing fair and efficient mechanisms becomes much less tractable.
    While there are flourishing positive results against strategic agents for homogeneous divisible items, very few of them are known to hold in cake cutting.

    We show that the Maximum Nash Welfare (MNW) mechanism, which provides desirable fairness and efficiency guarantees and achieves an \emph{incentive ratio} of $2$ for homogeneous divisible items, also has an incentive ratio of $2$ in cake cutting.
    Remarkably, this result holds even without the free disposal assumption, which is hard to get rid of in the design of truthful cake cutting mechanisms.

    Moreover, we show that, for cake cutting, the Partial Allocation (PA) mechanism proposed by Cole \textit{et al.} (EC'13), which is truthful and $1/e$-MNW for homogeneous divisible items, has an incentive ratio between $[e^{1 / e}, e]$ and when randomization is allowed, can be turned to be truthful in expectation.
    Given two alternatives for a trade-off between incentive ratio and Nash welfare provided by the MNW and PA mechanisms, we establish an interpolation between them for both cake cutting and homogeneous divisible items.

    Finally, we study the optimal incentive ratio achievable by envy-free cake cutting mechanisms.
    We first give an envy-free mechanism for two agents with an incentive ratio of $4 / 3$.
    Then, we show that any envy-free cake cutting mechanism with the connected pieces constraint has an incentive ratio of $\Theta(n)$.
\end{abstract}
\section{Introduction}

In the resource allocation problem, the goal is to allocate a set of resources among $n$ competing agents, and agents may express varied preferences over the resources.
In this paper, we focus on resources that are \emph{infinitely divisible}, with examples including cake and land.
When agents' strategic behaviors are not taken into account, i.e., agents truthfully report their preferences over \nbr{the resources}, the problem is often referred to as \emph{cake cutting}\footnote{\nb{While a majority of cake cutting literature treats fairness as the primary objective, it is also common to refer cake cutting as a general resource allocation problem \cite{DBLP:conf/faw/BuS23,DBLP:conf/atal/AumannDH13}.}}, where cake is a metaphor for a heterogeneous divisible item and is usually represented by interval $[0, 1]$.
The cake-cutting problem has been extensively studied for many decades by mathematicians, economists, political scientists, and more recently, computer scientists from various perspectives \cite{DBLP:books/daglib/0017730,DBLP:books/daglib/0017734,DBLP:reference/choice/Procaccia16,DBLP:books/daglib/0017738}.

A popular criterion for defining a ``good'' allocation is fairness, and the most prominent fairness notions include \emph{envy-freeness} and \emph{proportionality}.
Specifically, an allocation is called envy-free if each agent likes his received bundle at least as much as every other agent's bundle, and proportional if each agent receives a bundle with a value of at least $1 / n$ of the value he assigns to the entire cake.
It is well known that \nb{with mild assumptions}, an envy-free allocation always exists \cite{brams1995envy}, even if the bundle received by each agent is required to be a connected piece \cite{edward1999rental}.
Another important measure of an allocation is that of efficiency, which evaluates the solution from a global point of view.

The Maximum Nash Welfare (MNW) rule, which returns an allocation that maximizes Nash welfare, i.e., the geometric mean of agents' utilities, has been shown to be promising as it possesses various desirable properties including envy-freeness, proportionality and \emph{Pareto optimality}, an appealing efficiency concept \cite{weller1985fair}.
Moreover, \cite{berliant1992fair} strengthens the fairness guarantee provided by the MNW rule by showing that it satisfies \emph{group envy-freeness}, and \cite{segal2019monotonicity} shows that the MNW rule has intuitive resource and population monotonicity properties\footnote{\nb{Resouce (resp. population) monotonicity requires that when the set of resources (resp. agents) grows or shrinks, and the resources are re-divided using the same rule, the utility of each remaining agent must change in the same direction.}}.
\nbr{To make the discussion of the computational efficiency of the MNW rule meaningful, it is widely assumed that agents' value density functions are \emph{piecewise constant}, which admit succinct representations and can be used to approximate most natural real functions within an arbitrary precision.
Under the assumption that agents' value density functions are piecewise constant, the MNW allocations exactly coincide with the \emph{competitive equilibrium from equal incomes (CEEI)} allocations \cite[Volume 2, Chapter 14]{arrow1981handbook}, which in turn can be computed in polynomial time via the convex program of Eisenberg and Gale \cite{eisenberg1959consensus}.}

However, a challenging issue arises with self-interested agents, who may misreport their valuations in the hope of a better allocation for their own.
When strategic behaviors are involved, agents have varied incentives to truthfully report their preferences with homogeneous divisible items and with a heterogeneous divisible item, i.e., a cake.
When agents' value density functions are piecewise constant, it seems that we can reduce a cake to multiple homogeneous divisible items by partitioning the cake into multiple pieces such that each agent's valuation is uniform on each piece.
However, this interpretation is inaccurate when strategic behaviors are concerned: agents could possibly misreport the breakpoint positions in their valuations, and therefore modify the boundaries between items, or even add new items or merge items.
\nb{To illustrate, assume that there are two agents and a cake $[0, 1]$, where agent $1$ has different values for $[0, 1/2]$ and $[1/2, 1]$ and agent $2$ has the same value for the entire cake.
When both agents report truthfully, the above reduction would treat the cake as two homogeneous divisible items $[0, 1 / 2]$ and $[1 / 2, 1]$.
However, if agent $1$ maliciously claims that his preference is uniform over the cake, then the reduction would treat the entire cake as one homogeneous divisible item, resulting in a merge of items. In the case where agent $1$ knows that the mechanism breaks the tie by allocating the left-most piece to her and she has a higher value on $[0,1/2]$, this manipulation is beneficial for agent $1$.}
Therefore, there are less incentives for agents to truthfully report with a cake than with homogeneous divisible items.

The existence of strategic behaviors motivates the study of \emph{truthful} mechanisms, requiring that each agent cannot benefit from manipulation in \emph{any} case.
A strong negative result exists for cake cutting that there does not exist a deterministic, truthful, and proportional cake cutting mechanism when agents' valuations are piecewise constant \cite{DBLP:journals/ai/BuST23}.
On the positive side, \cite{DBLP:journals/geb/ChenLPP13} gives the first truthful envy-free cake cutting mechanism that works when agents’ valuations are \emph{piecewise uniform}\footnote{\nbr{A function over $[0, 1]$ is piecewise uniform if it is piecewise constant and its value at every point is either $0$ or $1$.}}, which is essentially equivalent to the MNW mechanism and even satisfies the stronger notion of \emph{group strategyproofness} \cite{DBLP:conf/wine/AzizY14}, and the design of truthful and fair mechanisms has also been studied for valuations more restrictive than piecewise uniform \cite{DBLP:conf/aaai/AlijaniFGST17,DBLP:journals/algorithmica/SeddighinFGAT19,DBLP:conf/isaac/AsanoU20}.
For homogeneous divisible items, dividing each item evenly among all the agents yields a truthful and envy-free mechanism.
However, this mechanism obviously has a poor performance in efficiency.
On the other hand, \cite{DBLP:conf/sigecom/ColeGG13} presents a truthful mechanism that provides a $1/e$-approximation guarantee for maximum Nash welfare ($1/e$-MNW for short), although the mechanism leaves some parts of the items unallocated.

Given all the difficulties to incorporate truthfulness for the divisible resources allocation problem, and since it is fairness and efficiency that serve as our primary concern, it should be satisfactory for us to relax truthfulness.
To further justify, even facing a non-truthful mechanism, it is also costly for agents to misreport as computing the best response of various mechanisms seems to be non-trivial and the information about other agents, which tends to be hard to elicit in practice, is also necessary.
Along this line, various relaxations of truthfulness are proposed for cake cutting \cite{DBLP:journals/ai/BuST23,DBLP:conf/dagstuhl/BramsJK07a,DBLP:journals/scw/OrtegaS22}.

Another popular notion to quantitatively measure the degree of untruthfulness is \emph{incentive ratio}, which has been considered on the literature in Fisher markets \cite{DBLP:journals/iandc/ChenDTZZ22,DBLP:conf/esa/ChenDZ11,DBLP:conf/icalp/ChenDZZ12}, resource sharing \cite{DBLP:conf/ipps/ChengDL20,DBLP:conf/sigecom/ChengDLY22,DBLP:journals/dam/ChenCDQY19}, housing markets \cite{todo2019analysis}, and resource allocation \cite{DBLP:journals/jcss/HuangWWZ24, DBLP:conf/aaai/XiaoL20,tao2023fair,DBLP:conf/atal/0037SX24}.
Informally, the incentive ratio of a mechanism is defined as the ratio between the largest possible utility that an agent can gain by manipulation and his utility in honest behavior.
The definition of incentive ratio is also closely related to \emph{approximate Nash equilibrium} \cite{DBLP:journals/sigecom/Rubinstein17, DBLP:journals/teco/CaragiannisFGS15}.
In particular, the result of \cite{DBLP:conf/icalp/ChenDZZ12} implies that the MNW mechanism achieves an incentive ratio of $2$ for homogeneous divisible items.\footnote{\nb{They show that Fisher market, the outcome of which coincides with the MNW allocation when agents have equal entitlements \cite[Volume 2, Chapter 14]{arrow1981handbook}, has an incentive ratio of $2$. Their result holds even for the more general setting in which agents have unequal entitlements.}}

As we have seen so far, there are flourishing positive results regarding Nash welfare against strategic agents for homogeneous divisible items.
Nevertheless, they either crucially rely on the assumption that agents are not capable of manipulating the boundaries between items \cite{DBLP:conf/sigecom/ColeGG13} or are established by the analysis on Fisher markets \cite{DBLP:journals/ior/BranzeiGM22, DBLP:conf/icalp/ChenDZZ12}, which bring up technical challenges to extend the positive results to cake cutting.
Therefore, a natural question arises: \emph{Do the celebrated MNW mechanism and the mechanism by \cite{DBLP:conf/sigecom/ColeGG13} provide any incentive guarantee in cake cutting?}

\subsection{Our Results}

In this paper, we answer the above question affirmatively.
Our main contribution is showing that for piecewise constant valuations, the incentive ratio of the MNW mechanism is $2$ in cake cutting.
Remarkably, it holds even without the free disposal assumption, which turns out to be critical in the design of truthful cake cutting mechanisms and is notoriously hard to get rid of~\cite{DBLP:journals/geb/ChenLPP13,DBLP:journals/scw/BeiHS20}.

Moreover, we show that the Partial Allocation (PA) mechanism proposed by \cite{DBLP:conf/sigecom/ColeGG13}, which is truthful and $1 / e$-MNW for homogeneous divisible items, has an incentive ratio between $[e^{1 / e}, e]$ in cake cutting.
If randomization is allowed, then the PA mechanism can be turned to be truthful in expectation.

Noticing that the MNW mechanism and the PA mechanism provide two alternatives for a trade-off between incentive ratio and Nash welfare, we interpolate between them by properly adjusting the parameters in the PA mechanism.
In particular, for cake cutting, we show that for every $c \in [0, 1]$, there exists a mechanism with incentive ratio $2^{1 - c}$ and $e^{-c}$-MNW guarantee and the mechanism is deterministic if and only if $c = 0$.
For homogeneous divisible items, we establish the same trade-off curve with the mechanism deterministic for all $c \in [0, 1]$.
Then we give a lower bound for the interpolation by showing that there does not exist a deterministic mechanism for homogeneous divisible items with an incentive ratio at most $2^{1 - c}$ that can guarantee an approximation ratio for MNW strictly greater than $2^{-c}$.

Finally, we study the optimal incentive ratio achievable by envy-free cake cutting mechanisms.
We first give an envy-free mechanism for two agents with an incentive ratio of $4 / 3$, largely improving the incentive ratio of $2$ provided by the MNW mechanism.
Then, we show that any envy-free cake cutting mechanism has an incentive ratio of $\Theta(n)$ when each bundle must be a connected piece.

\subsection{Related Work}

\nb{In this paper, we assume each agent's value density function to be piecewise constant and consider \emph{direct} mechanisms, i.e., all agents directly report their preferences to the mechanism.
Moreover, we require all agents to report simultaneously.}
Another popular query model to communicate with agents is called \emph{the Robertson-Webb query model} \cite{DBLP:books/daglib/0017738}, where agents reveal their valuations to the mechanism through iterative queries and, from the game-theoretic point of view, the game that agents are playing is an \emph{extensive-form game}, creating more room for agents to manipulate.
Under the Robertson-Webb query model, \cite{DBLP:conf/aaai/KurokawaLP13} shows that there does not exist a truthful and envy-free mechanism that terminates within a bounded number of queries, and \cite{DBLP:conf/ijcai/BranzeiM15} further proves that for any truthful mechanism, there always exists an agent receiving a value zero.

\nb{For homogeneous divisible items, truthfulness is more tractable, and one can hope for simultaneously achieving truthfulness and other objectives.}
A series of research focuses on the best approximation ratio for \emph{social welfare} achievable by truthful mechanisms \cite{DBLP:conf/atal/GuoC10,DBLP:conf/wine/HanSTZ11,DBLP:conf/atal/ColeGG13,DBLP:conf/ijcai/Cheung16}.
\cite{DBLP:conf/iat/ZivanDOS10} aims to achieve envy-freeness and Pareto optimality while reducing, but not eliminating, agents' incentive to misreport.

The design of fair and truthful mechanisms for allocating indivisible items has also drawn wide attention.
\cite{DBLP:conf/sigecom/AmanatidisBCM17} characterizes truthful mechanisms for two agents with additive valuations and show that truthfulness is incompatible with any meaningful fairness notion.
To circumvent the negative result, many relaxed notions of truthfulness are proposed \cite{brams2006better,DBLP:conf/nips/0001V22,DBLP:journals/ai/BuST23}, and the restricted setting of dichotomous valuations is also studied \cite{DBLP:conf/aaai/BabaioffEF21,DBLP:conf/wine/0002PP020,DBLP:conf/aaai/BarmanV22}.
Another series of research focuses on the existence of \emph{pure Nash equilibria} and their fairness properties with respect to the underlying true valuations \cite{DBLP:journals/corr/abs-2301-13652,DBLP:conf/wine/AmanatidisBFLLR21}.

The MNW mechanism turns out to be central for allocating indivisible items as well, despite the NP-hardness for its computation \cite{DBLP:journals/amai/NguyenRR13}.
The MNW mechanism has been shown to provide promising fairness guarantees under various contexts, including \emph{envy-freeness up to one item (EF1)} \cite{DBLP:journals/teco/CaragiannisKMPS19}, \emph{envy-freeness up to any item (EFX)} when there are at most two possible values for items \cite{DBLP:journals/tcs/AmanatidisBFHV21}, and weak-weighted EF1 in the asymmetric case where agents have unequal entitlements \cite{DBLP:journals/teco/ChakrabortyISZ21}.
Note that the above results hold for additive valuations, and for the more restricted dichotomous additive valuations, \cite{DBLP:conf/wine/0002PP020} shows that the MNW mechanism is group strategyproof.

Another approach to bypass the negative result of incorporating truthfulness is to allow randomized mechanisms.
A simple mechanism given by \cite{DBLP:conf/sagt/MosselT10} is universal envy-free and truthful in expectation.
Moreover, \cite{DBLP:conf/wine/AzizY14} proposes a mechanism that is truthful in expectation, robust proportional, and satisfies \emph{unanimity}\footnote{\nb{A cake cutting mechanism satisfies unanimity if when each agent's most preferred $1/n$ length of the cake is disjoint from other agents, each agent receives his most preferred subset of the cake of length $1/n$.}}.
Moving to the allocation of indivisible items, \cite{garg2022efficient} shows that, even considering randomized mechanisms, every Pareto optimal and truthful mechanism is a serial dictatorship.

\nb{It is worthy of mentioning that \cite{DBLP:journals/ior/BranzeiGM22} shows that the \emph{price of anarchy (PoA)} of the MNW mechanism is at most $2$ for homogeneous divisible items.
In particular, the PoA of a mechanism here is defined as the largest ratio between the best Nash welfare achievable without strategic behaviors and the worst Nash welfare achieved at any equilibrium of the mechanism.
However, Brânzei \textit{et al.} establish the PoA of the MNW mechanism by utilizing the techniques from Fisher markets, which are tailored to homogeneous divisible items and are not applicable to cake cutting.}
\section{Preliminaries}

Suppose that there is a divisible and heterogeneous item, usually referred to as a cake, which is represented by interval $[0, 1]$.
Denote $N = [n]$ as the set of agents, and each agent has a \emph{density function} $f_i: [0, 1] \to \R_{\geq 0}$, which measures his preference on the cake.
In this paper, we assume density functions to be \emph{piecewise constant}; that is, we can partition the cake into \nb{finitely many} intervals $X_1, \ldots, X_m$ such that the value of each density function is a constant on each interval.
For simplicity, we write $f_i(I)$ as the value of $f_i$ on subset $I$ such that $I \subseteq X_t$ for some $X_t$.
The value of agent $i$ for a subset $S \subseteq [0, 1]$ is defined as $v_i(S) = \int_S f_i(x) dx$.
Note that the partition $\{X_1, \ldots, X_m\}$ is not pre-specified, but determined by the density functions reported by agents.
Without loss of generality, assume that each agent has a strictly positive value for the entire cake; \nb{otherwise, none of the properties would be violated by allocating an empty set to this agent, and hence we can simply remove him}.

An allocation is a tuple $A = (A_1, \ldots, A_n)$, where $A_i$ is the bundle allocated to agent $i$, such that $A_i$ is the union of finitely many intervals and $A_i \cap A_j$ has measure zero for $i \neq j$.
\nbr{Without loss of generality, we assume the intervals forming a bundle to be closed intervals with a positive length.}
Unless otherwise stated, we adopt the \emph{free disposal} assumption, which allows the mechanism to discard resources without incurring a cost, and thus $\bigcup_{i \in N} A_i = [0, 1]$ does not necessarily hold.\footnote{We refer the readers to \cite{DBLP:journals/geb/ChenLPP13} for a discussion why this assumption is crucial and hard to be removed in many applications.}
\nbr{In particular, the free disposal assumption is removed only in \Cref{sec:nash-welfare-piecewise-constant}, where we establish the incentive ratio of the MNW mechanism in cake cutting.}
When the free disposal assumption is made, we implicitly discard the subset $S$ that no agents value, i.e., $v_i(S) = 0$ for all $i \in N$.
We say that an allocation $A$ is \emph{envy-free} if for all agents $i, j$, we have $v_i(A_i) \geq v_i(A_j)$.
An allocation $A$ is called \emph{Pareto optimal} if there does not exist another allocation $A'$ such that $v_i(A'_i) \geq v_i(A_i)$ for all $i \in N$ with at least one inequality strict.

A (deterministic/randomized) cake cutting mechanism $M$ receives a tuple of density functions $\bdf = (f_1, \ldots, f_n)$ as input and (deterministically/randomly) outputs an allocation $(M_1(\bdf), \ldots, M_n(\bdf))$, where $M_i(\bdf)$ is the bundle received by agent $i$. 
Define the \emph{utility} of agent $i$ as $v_i(M_i(\bdf))$.
Now, we introduce the notion of incentive ratio, \nb{which is informally defined as the largest ratio between the (expected) utility of an agent under misreporting and that under truthful telling over all possible instances}.

\begin{definition}[Incentive Ratio]
    The \emph{incentive ratio} of a (randomized) mechanism $M$ is defined as
    \begin{align*}
        \sup_{n} \sup_{f_1, \ldots, f_n} \sup_{i \in N} \sup_{f_i'} \frac{\E[v_i(M_i(f_1, \ldots, f_i', \ldots, f_n))]}{\E[v_i(M_i(f_1, \ldots, f_i, \ldots, f_n))]},
    \end{align*}
    where the expectations are over the randomness of $M$.
\end{definition}

\nbr{Throughout this paper, we will adhere to the slightly unusual convention that $f_i'$ denotes the value density function misreported by agent $i$ instead of the derivative of $f_i$.}
Note that the incentive ratio of a mechanism is at least $1$ by simply setting $f_i' = f_i$.
We say that a mechanism is \emph{truthful} if its incentive ratio is exactly $1$, i.e., an agent cannot gain a higher utility by manipulating his density function.
\nbr{We remark here that when the mechanism is randomized, the truthfulness defined here is also known as \emph{truthfulness in expectation} or \emph{ex-ante truthfulness}.
For fixed $n$ and (randomized) mechanism $M$, we will also refer to the quantity
\begin{align*}
    \sup_{f_1, \ldots, f_n} \sup_{f_i'} \frac{\E[v_i(M_i(f_1, \ldots, f_i', \ldots, f_n))]}{\E[v_i(M_i(f_1, \ldots, f_i, \ldots, f_n))]}
\end{align*}
as the \emph{incentive ratio for agent $i$}, and we say that $M$ is \emph{truthful for agent $i$} if the incentive ratio for agent $i$ equals $1$.}

\paragraph{Homogeneous divisible items}
In the homogeneous divisible items setting, the set of intervals $\{X_1, \ldots, X_m\}$ is pre-specified by mechanisms before agents report their density functions.
Hence, instead of reporting a density function, it is sufficient for each agent to report $m$ valuations, specifying the value density on each of the intervals.
Another widely used description for the homogeneous divisible items setting is that there are $m$ homogeneous divisible items \cite{DBLP:conf/sigecom/ColeGG13,DBLP:conf/ijcai/Cheung16}, each item for an interval, and these two descriptions are equivalent.
From the first description, it is easy to see that, for homogeneous divisible items, where agents can only misreport their valuations for each interval, there are fewer rooms for agents to misreport than cake cutting, where the breakpoint positions of intervals are manipulable as well.
In this paper, we will use the second description.

\subsection{Maximum Nash Welfare}

Given an allocation $(A_1, \ldots, A_n)$, the \emph{Nash welfare} is defined as $(\prod_{i \in N} v_i(A_i))^{1/n}$.
In the Maximum Nash Welfare (MNW) mechanism, an allocation that maximizes Nash welfare is returned (if there are multiple such allocations, then it breaks the tie arbitrarily).
It is easy to see that the MNW solution is scale-invariant, i.e., scaling an agent's density function by an arbitrary positive factor does not affect the outcome.
We first give the resource monotonicity of the MNW mechanism.

\begin{theorem}[Resource Monotonicity~\cite{segal2019monotonicity}]\label{thm:resource-monotonicity}
    Let $I \subseteq [0, 1]$.
    Suppose $A^-$ is an MNW allocation on $[0, 1] - I$, and $A$ is an MNW allocation on $[0, 1]$.
    Then for each agent $i$, we have $v_i(A_i^-) \leq v_i(A_i)$.
\end{theorem}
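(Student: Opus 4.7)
The plan is to exploit the equivalence between the MNW allocation and the \emph{competitive equilibrium from equal incomes (CEEI)} under piecewise constant valuations. The MNW allocation on any resource set $X$ is supported by a dual price function $p^X$ arising from the Eisenberg--Gale program, satisfying the KKT conditions $p^X(x) \geq f_i(x)/v_i(A_i^X)$ for all $x \in X$ and all $i$, with equality on $A_i^X$, and $\int_{A_i^X} p^X = 1$ for each agent $i$. Equivalently, $A_i^X$ is agent $i$'s utility-maximizing bundle subject to a unit $p^X$-budget. I would first establish this duality framework carefully, instantiating it at both $X = [0,1]$ (yielding prices $p$) and $X = [0,1] \setminus I$ (yielding $p^-$).

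Given this setup, the desired inequality $v_i(A_i^-) \leq v_i(A_i)$ reduces to the \emph{affordability} claim $\int_{A_i^-} p \leq 1$ for every $i$: once this is established, the demand-maximizing property of $A_i$ at prices $p$ with unit budget immediately yields $v_i(A_i^-) \leq v_i(A_i)$.

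The main obstacle is precisely this per-agent affordability claim. The aggregate inequality $\sum_i \int_{A_i^-} p = \int_{[0,1] \setminus I} p \leq \int_{[0,1]} p = n$ is trivial and already implies the product-form monotonicity $\prod_i v_i(A_i^-) \leq \prod_i v_i(A_i)$, but individual monotonicity is strictly stronger and requires exploiting the Cobb--Douglas structure of Nash welfare, whose equilibrium equates marginal utility per unit budget across agents. A natural route is a continuous-perturbation argument: define a one-parameter family of valuations $f_i^{(\lambda)}$ that agrees with $f_i$ on $[0,1] \setminus I$ and equals $\lambda \cdot f_i$ on $I$, interpolating between the two instances at $\lambda = 0$ (where, with free disposal, $I$ is effectively absent and MNW returns $A^-$) and $\lambda = 1$ (where MNW returns $A$). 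Tracking the MNW utilities $v_i(\lambda)$ along this path and using the first-order optimality conditions of the Eisenberg--Gale program at generic $\lambda$, one shows $\partial v_i(\lambda) / \partial \lambda \geq 0$ for every $i$, as increasing $\lambda$ only loosens the budget constraint of each agent. Integrating from $\lambda = 0$ to $\lambda = 1$ then delivers the theorem; the technical care lies in handling the measure-zero set of parameters at which the MNW allocation is non-unique or the dual prices jump discontinuously, which can be controlled using the piecewise-constant assumption on valuations.
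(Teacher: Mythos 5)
The paper does not actually prove this statement; it imports it from \cite{segal2019monotonicity} and uses it as a black box (e.g., in \Cref{coro:uniqueness-mnw} and in \Cref{lem:resource-monotonicity-for-weakly-MNW}). So your attempt has to be judged against what a complete proof requires rather than against an in-paper argument. Your setup is sound: the CEEI duality for piecewise constant valuations is correct, the reduction of the theorem to the per-agent affordability claim $\int_{A_i^-} p \,dx \leq 1$ is valid (given affordability, $v_i(A_i^-) = \int_{A_i^-} f_i \leq v_i(A_i)\int_{A_i^-} p \leq v_i(A_i)$ closes the argument), and you correctly diagnose that the aggregate bound $\int_{[0,1]\setminus I} p \leq n$ only yields monotonicity of the product.

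The gap is in the step you propose to cross that obstacle. The claim $\partial v_i(\lambda)/\partial\lambda \geq 0$ is justified by saying that increasing $\lambda$ ``only loosens the budget constraint of each agent,'' but every agent's budget is fixed at $1$ throughout; what varies with $\lambda$ are the equilibrium prices and the allocation, and the envelope theorem applied to the Eisenberg--Gale objective only gives $\frac{d}{d\lambda}\sum_i \log v_i(\lambda) \geq 0$, which is the aggregate statement you already dismissed as insufficient. Worse, the homotopy is circular rather than a reduction in difficulty: since densities are piecewise constant, scaling every agent's density on $I$ by $\lambda$ yields an instance whose feasible utility set --- hence whose MNW utility vector --- coincides with that of the cake $([0,1]\setminus I)\cup I_\lambda$ for a nested family $I_\lambda \subseteq I$ occupying a $\lambda$ fraction of each constant piece of $I$. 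Thus ``$v_i(\lambda)$ is non-decreasing in $\lambda$'' is literally resource monotonicity along that nested family, i.e., the statement to be proved. Closing the argument requires a genuine comparative-statics analysis of how the equilibrium reacts to adding resource, for instance the transfer-along-a-forest bookkeeping that the paper itself performs in \Cref{lem:strict-resource-monotonicity-for-weak-MNW} for the weakly MNW variant, or the combinatorial argument of \cite{segal2019monotonicity}; as written, your proof assumes its conclusion at the decisive step.
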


By letting $I = \emptyset$, an important implication of \Cref{thm:resource-monotonicity} is that the values received by a particular agent in all MNW allocations are identical.

\begin{corollary}\label{coro:uniqueness-mnw}
    If $A$ and $A'$ are MNW allocations with respect to the same profile, then $v_i(A_i) = v_i(A_i')$ for all $i \in N$.
\end{corollary}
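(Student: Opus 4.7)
The plan is to derive the corollary directly from the resource monotonicity theorem by invoking it twice with the trivial choice $I = \emptyset$. When $I = \emptyset$, the set $[0,1] - I$ is simply $[0,1]$, so both $A^-$ and $A$ in the statement of \Cref{thm:resource-monotonicity} are MNW allocations on the full cake with respect to the same profile of density functions. The theorem then yields, for each agent $i$, a comparison $v_i(A_i^-) \leq v_i(A_i)$ between the utilities that agent $i$ obtains in two arbitrary MNW allocations on the same instance.

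To apply this to the two MNW allocations $A$ and $A'$ in the corollary, I would first instantiate \Cref{thm:resource-monotonicity} with $I = \emptyset$, the role of $A^-$ played by $A$ and the role of $A$ played by $A'$, to obtain $v_i(A_i) \leq v_i(A_i')$ for all $i \in N$. Then I would swap the roles of $A$ and $A'$ and apply the theorem a second time to obtain the reverse inequality $v_i(A_i') \leq v_i(A_i)$. Combining the two gives $v_i(A_i) = v_i(A_i')$ for every $i \in N$, which is exactly the claim.

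There is essentially no obstacle here, since the result is a formal consequence of \Cref{thm:resource-monotonicity}; the only thing to check is that the theorem is indeed stated for \emph{any} pair of MNW allocations rather than for some specific selection rule, so that nothing prevents us from choosing either $A$ or $A'$ to be the ``$A^-$'' allocation in the hypothesis. Since the theorem as cited makes no such restriction and the MNW property is symmetric in the two allocations, the argument goes through immediately.
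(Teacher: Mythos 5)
Your proposal is correct and matches the paper's argument exactly: the paper also obtains \Cref{coro:uniqueness-mnw} by setting $I = \emptyset$ in \Cref{thm:resource-monotonicity}, so that both allocations are MNW on the full cake and the inequality applied in both directions forces equality. Nothing further is needed.
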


For $0 \leq \alpha \leq 1$, we say that an allocation $A$ is \emph{$\alpha$-approximate MNW} (\emph{$\alpha$-MNW} for short) if for an MNW allocation $A'$ and each agent $i \in N$, we have $v_i(A_i) \geq \alpha \cdot v_i(A_i')$.
By \Cref{coro:uniqueness-mnw}, the approximation factor for MNW is well-defined.

Notably, in an MNW allocation, the value received by each agent is strictly positive.
The following characterization of MNW allocations will be used repeatedly.

\begin{theorem}[\cite{DBLP:conf/ijcai/BeiCHTW17}]\label{thm:nash-welfare-condition}
    An allocation $A = (A_1, \ldots, A_n)$ is MNW if and only if the following condition holds for each interval $X_t$:
    \begin{align}\label{cond:nash-welfare}
        \text{if } A_i \cap X_t \neq \emptyset, \text{ then }
        \frac{f_i(X_t)}{v_i(A_i)} \geq \frac{f_j(X_t)}{v_j(A_j)}
        \text{ for all } i, j.
    \end{align}
\end{theorem}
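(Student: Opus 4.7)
The plan is to recognize that an allocation is MNW if and only if it maximizes the concave objective $\sum_i \log v_i(A_i)$, and then to derive the stated inequality as the KKT stationarity condition governing this maximization. Concretely, the condition $f_i(X_t) / v_i(A_i) \geq f_j(X_t) / v_j(A_j)$ whenever $A_i \cap X_t \neq \emptyset$ is the familiar ``equal bang-per-buck'' rule: on any interval that agent $i$ actually receives, $i$'s marginal log-value per unit length must dominate every other agent's.

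For the necessity direction, I would fix an MNW allocation $A$, an interval $X_t$, and any $i, j$ with $A_i \cap X_t \neq \emptyset$, and consider transferring a subinterval of $A_i \cap X_t$ of measure $\epsilon > 0$ from $i$ to $j$. The resulting allocation $A^{\epsilon}$ satisfies $v_i(A^{\epsilon}_i) = v_i(A_i) - \epsilon f_i(X_t)$ and $v_j(A^{\epsilon}_j) = v_j(A_j) + \epsilon f_j(X_t)$, while all other agents' values are unchanged. Differentiating $\sum_k \log v_k(A^{\epsilon}_k)$ at $\epsilon = 0$ gives $f_j(X_t) / v_j(A_j) - f_i(X_t) / v_i(A_i)$; since $A$ is a maximizer this derivative must be non-positive, yielding the desired inequality.

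For the sufficiency direction, I would use the tangent-hyperplane inequality $\log x \leq \log y + (x - y) / y$ with $y = v_i(A_i)$ against any competing allocation $B$: summing $\log v_i(B_i) - \log v_i(A_i) \leq v_i(B_i) / v_i(A_i) - 1$ over $i$ reduces the task to showing $\sum_i v_i(B_i) / v_i(A_i) \leq n$. Expanding $v_i(B_i) = \sum_t f_i(X_t) |B_i \cap X_t|$ and setting $\lambda_t := \max_j f_j(X_t) / v_j(A_j)$, the hypothesis gives $f_i(X_t) / v_i(A_i) \leq \lambda_t$ for all $i$, with equality whenever $A_i \cap X_t \neq \emptyset$. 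Hence $\sum_i v_i(B_i) / v_i(A_i) \leq \sum_t \lambda_t \sum_i |B_i \cap X_t|$, while $n = \sum_t \lambda_t \sum_{i : A_i \cap X_t \neq \emptyset} |A_i \cap X_t|$, and both evaluate to $\sum_t \lambda_t |X_t|$ once $A$ and $B$ are known to exhaust every valuable interval.

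The main obstacle is precisely this last full-allocation caveat under free disposal: on any $X_t$ that $A$ leaves unallocated the stated condition is vacuously satisfied, so by itself it need not pin $A$ down as MNW. I would dispose of it by restricting attention, without loss of generality, to allocations that exhaust every $X_t$ having at least one positive-valuing agent, since leaving such an interval unallocated can only decrease Nash welfare; over this restricted family the concavity argument above runs cleanly. Equivalently, the result can be viewed as the first-order optimality condition of the Eisenberg--Gale convex program that computes the CEEI allocation, with the multipliers $\lambda_t$ playing the role of interval prices.
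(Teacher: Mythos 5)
The paper does not prove this theorem---it is imported verbatim from \cite{DBLP:conf/ijcai/BeiCHTW17}, and the only thing the paper adds is the informal ``increase rate'' paragraph following the statement, which is exactly the intuition behind your necessity direction. Your argument is the standard and correct one: the perturbation/derivative computation for necessity, and the tangent-line inequality $\log x \le \log y + (x-y)/y$ for sufficiency, i.e., the first-order optimality conditions of the Eisenberg--Gale program. You also correctly identified the one real subtlety, namely that the displayed condition is vacuous on intervals the allocation leaves untouched, so sufficiency only holds among allocations that exhaust every interval some agent values; this matches the paper's convention (the free disposal assumption is used only to discard subsets that \emph{no} agent values, and it is dropped entirely in the section where the theorem is actually applied). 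Two cosmetic points: in the sufficiency direction you only need $\sum_i |B_i \cap X_t| \le |X_t|$ for the competing allocation $B$, not equality, so $B$ need not be complete; and the perturbation in the necessity direction requires $A_i \cap X_t$ to have positive measure, which is the intended reading of $A_i \cap X_t \neq \emptyset$ given the paper's convention that bundles are unions of closed intervals of positive length.
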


To intuitively understand \Cref{thm:nash-welfare-condition}, if we add a subset of length $\epsilon$ with density $f_i(X_t)$ to $A_i$, then the Nash welfare will increase by a factor of
\begin{align*}
    \frac{v_i(A_i) + \epsilon \cdot f_i(X_t)}{v_i(A_i)}
    = 1 + \epsilon \cdot \frac{f_i(X_t)}{v_i(A_i)}.
\end{align*}
Thus, if agent $i$ receives a subset of $X_t$ with a positive length in an MNW allocation, then he must have the largest "increase rate" with respect to $X_t$, i.e., $\frac{f_i(X_t)}{v_i(A_i)}$.
With this explanation, given an allocation $A$, we say that agent $i$ \emph{deserves} interval $X_t$ if $\frac{f_i(X_t)}{v_i(A_i)} \geq \frac{f_j(X_t)}{v_j(A_j)}$ for all $j \neq i$.
\section{Incentive Guarantees of the MNW Mechanism}
\label{sec:nash-welfare-piecewise-constant}

In this section, we establish the incentive ratio of $2$ for the MNW mechanisms without the free disposal assumption.
Note that when the free disposal assumption is not made, the mechanism must allocate the cake completely.

\begin{theorem}\label{thm:incentive-ratio-of-MNW}
    The incentive ratio of the MNW mechanism without the free disposal assumption is $2$, and the lower bound holds even with the free disposal assumption.
\end{theorem}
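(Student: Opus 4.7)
The plan is to handle the lower bound via a construction transferred from the homogeneous divisible items setting, and to establish the upper bound by first reducing the cake cutting problem to homogeneous divisible items on a common refinement and then controlling the extra value that agent $i$ may obtain from junk subsets via a perturbation argument.

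For the lower bound, which is claimed to hold even under the free disposal assumption, I would adapt the two-agent example of \cite{DBLP:conf/icalp/ChenDZZ12} for CEEI in linear Fisher markets. Since MNW with equal entitlements coincides with CEEI for piecewise constant valuations, and homogeneous divisible items form a special case of cake cutting, the construction directly yields a family of instances in which an appropriate misreport by agent $i$ drives $v_i(A'_i)/v_i(A_i)$ arbitrarily close to $2$.

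For the upper bound, the first step is to pass to the common refinement $Y_1, \ldots, Y_M$ of the breakpoints of $(f_1, \ldots, f_n)$ and $f_i'$, so that every reported density is uniform on each cell. Since MNW depends only on the agents' numerical valuations of measurable subsets (not on how those subsets are labeled), the MNW allocations in cake cutting coincide with those of the induced homogeneous divisible items instance on $\{Y_k\}$. Under the with-free-disposal convention, the incentive ratio of MNW on homogeneous items is at most $2$ by \cite{DBLP:conf/icalp/ChenDZZ12}, giving $v_i(\widehat{A}_i) \leq 2 v_i(A_i)$, where $\widehat{A}_i$ denotes agent $i$'s MNW bundle at the misreport when free disposal is permitted. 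The main obstacle is extending this to no-free-disposal, where the mechanism must also assign, according to some tie-breaking rule, the junk subset $S_0' := \{x : f_i'(x) = 0 \text{ and } f_j(x) = 0 \text{ for all } j \neq i\}$; in the worst case all of $S_0'$ goes to agent $i$, yielding $v_i(A'_i) \leq v_i(\widehat{A}_i) + v_i(S_0')$.

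To bound this sum by $2 v_i(A_i)$, I would employ a perturbation argument. For $\varepsilon > 0$, consider the modified misreport $f_i'' := f_i' + \varepsilon \cdot \mathbf{1}_{S_0'}$. Under the profile $(f_1, \ldots, f_i'', \ldots, f_n)$, agent $i$ is the sole agent with positive reported value on $S_0'$, so by the Nash welfare condition (\Cref{thm:nash-welfare-condition}) any MNW allocation must assign all of $S_0'$ to him. Moreover, on $[0,1] \setminus S_0'$ the reported profile agrees with the original misreport, so by the uniqueness statement of \Cref{coro:uniqueness-mnw} the restriction of the MNW allocation there converges in utility to $\widehat{A}$ as $\varepsilon \to 0$. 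Applying the already-established with-free-disposal bound to $f_i''$ yields $v_i(A_i'') \leq 2 v_i(A_i)$ for every $\varepsilon > 0$, and passing to the limit gives $v_i(\widehat{A}_i) + v_i(S_0') \leq 2 v_i(A_i)$, from which the desired bound follows. The most delicate step will be rigorously justifying this limit, which I expect to follow from the continuity of the Eisenberg--Gale program together with \Cref{coro:uniqueness-mnw}.
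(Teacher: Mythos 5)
Your route is genuinely different from the paper's: the paper never reduces to the homogeneous-items result of Chen et al.; instead it gives a self-contained argument that introduces \emph{weakly MNW} allocations, carves out a subset $S \subseteq A_1'$ with $v_1(S) = v_1(A_1')/2$ such that $(S, A_2', \ldots, A_n')$ is weakly MNW for the \emph{true} profile on the partial cake, and then invokes resource monotonicity. Your common-refinement reduction plus a junk-handling step is a plausible alternative skeleton, but it has two concrete gaps.

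First, the lower bound. The paper explicitly notes that the hard instance of Chen et al. works only in the setting where agents have unequal endowments, so it cannot be ported to the MNW mechanism (equal entitlements); ``adapting'' it is exactly the missing content, not a routine step. Moreover, a two-agent example cannot establish the bound: the paper's construction is an $n$-agent instance achieving ratio $2 - 1/n$, so $2$ is only approached as $n \to \infty$ (for $n = 2$ it yields only $3/2$, and the paper separately exhibits a two-agent envy-free mechanism with ratio $4/3$, strongly suggesting two agents do not suffice).

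Second, the perturbation step for the no-free-disposal case. \Cref{coro:uniqueness-mnw} and the continuity of the Eisenberg--Gale program control the \emph{reported} utilities, which are unique across MNW allocations and continuous in the input; they say nothing about agent $i$'s \emph{true} utility $v_i$, which varies across MNW allocations of the same misreported profile --- this is precisely the tie-breaking phenomenon the paper highlights in its introduction. Since the mechanism breaks ties arbitrarily, the upper bound must hold for \emph{every} MNW allocation $\widehat{A}$ of $(f_1, \ldots, f_i', \ldots, f_n)$. Your perturbation $f_i'' = f_i' + \varepsilon \mathbf{1}_{S_0'}$ strictly lowers agent $i$'s ratios $f_i'(X_t)/v_i''(A_i'')$ on every interval outside $S_0'$, which can break ties against him and exclude him from intervals he could have received under $f_i'$; the MNW allocation correspondence is not lower hemicontinuous, so $\lim_{\varepsilon \to 0} v_i(A_i'' \setminus S_0')$ can be strictly smaller than $\sup_{\widehat{A}} v_i(\widehat{A}_i)$. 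The limit therefore bounds only some limit allocation, not the worst-case one. A related caveat applies to your first step: for the common-refinement reduction to cover arbitrary tie-breaking, you need the homogeneous-items bound of Chen et al. to hold for every equilibrium allocation, not just a canonical one.
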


Before proving \Cref{thm:incentive-ratio-of-MNW}, we first give a useful definition.

\begin{definition}\label{def:week-MNW-allocation}
    Given profile $(f_1, \ldots, f_n)$, we say that an allocation $(A_1, \ldots, A_n)$ is \emph{weakly MNW} if for each $i \in N$, we have $v_i(A_i) > 0$ and for each interval $X_t$,
    \begin{align*}
        \text{if } A_i \cap X_t \neq \emptyset, 
        \text{ then }
        \frac{f_i(X_t)}{v_i(A_i)} \geq \frac{f_j(X_t)}{v_j(A_j)} \text{ for } j = 2, \ldots, n.
    \end{align*}
\end{definition}

Note that the condition in \Cref{def:week-MNW-allocation} is slightly weaker than the condition in \Cref{thm:nash-welfare-condition}, and thus an MNW allocation is also weakly MNW, which implies the existence of weakly MNW allocations.
Intuitively, the relaxation can be described as, in a weakly MNW allocation, agent $1$ may not receive some parts of the cake even if he is the only agent that deserves them.
As a result, the subset received by agent $1$ in a weakly MNW allocation could be less valuable than that received in an MNW allocation.
Moreover, by \Cref{thm:nash-welfare-condition}, if $(A_1, \ldots, A_n)$ is weakly MNW with respect to $(f_1, \ldots, f_n)$, then $(A_2, \ldots, A_n)$ is MNW on $A_2 \cup \ldots \cup A_n$ with respect to $(f_2, \ldots, f_n)$.
Formally, we have the following lemma.

\begin{lemma}\label{lem:value-in-weakly-MNW-allocations}
    The maximum value that agent $1$ receives among all weakly MNW allocations equals to the value agent $1$ receives in an MNW allocation.
\end{lemma}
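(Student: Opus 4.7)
The direction $\max_A v_1(A_1) \geq v_1(A_1^*)$ is immediate: by \Cref{thm:nash-welfare-condition}, any MNW allocation $A^*$ satisfies $f_i(X_t)/v_i(A_i^*) \geq f_j(X_t)/v_j(A_j^*)$ for every $j \in N$ whenever $A_i^* \cap X_t \neq \emptyset$, which is strictly stronger than the requirement of \Cref{def:week-MNW-allocation} (where $j$ ranges only over $\{2, \ldots, n\}$). Hence every MNW allocation is already weakly MNW, and $v_1(A_1^*)$ is achieved by a weakly MNW allocation.

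For the reverse direction, I would interpret a weakly MNW allocation through the Fisher-market (Eisenberg--Gale) equivalence and then appeal to the monotonicity of a buyer's equilibrium utility in its budget. Given any weakly MNW $A$, define \emph{prices} $p_t := \max_{j \geq 2} f_j(X_t)/v_j(A_j)$, which are well-defined since each $v_j(A_j) > 0$. A short computation using \Cref{def:week-MNW-allocation} yields that every $j \geq 2$ spends exactly $\sum_t p_t |A_j \cap X_t| = 1$ at these prices (since $A_j \cap X_t \neq \emptyset$ forces $f_j(X_t)/v_j(A_j) = p_t$), while agent $1$'s spending $B_1 := \sum_t p_t |A_1 \cap X_t|$ satisfies $B_1 \leq 1$ (from $f_1(X_t) \geq p_t v_1(A_1)$ on every interval intersecting $A_1$). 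Since the MNW allocation $A^*$ is precisely the equilibrium of the Fisher market with all unit budgets, and more generally the equilibrium utility of agent $1$ in the weighted Eisenberg--Gale program $\max \sum_i B_i \log v_i(\cdot)$ is non-decreasing in $B_1$ (a standard exchange-type inequality that follows from concavity of the weighted objective), I would conclude $v_1(A_1) \leq v_1(A_1^*)$.

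The main technical hurdle is closing the gap between the given allocation $A$ and a genuine Fisher equilibrium at budgets $(B_1, 1, \ldots, 1)$: \Cref{def:week-MNW-allocation} only forces agent $1$'s held pieces to achieve bang-per-buck at least $v_1(A_1)$ at the prices $p_t$, and does not preclude some unheld interval from having a strictly larger bang-per-buck, so $A$ need not maximize $v_1$ at budget $B_1$. I plan to resolve this via a local-improvement procedure that preserves the weakly MNW property while keeping $v_1$ non-decreasing: whenever some interval $X_s$ held by an agent $j \geq 2$ satisfies $f_1(X_s)/v_1(A_1) > f_j(X_s)/v_j(A_j) = p_s$, transfer an infinitesimal piece of $X_s$ from $j$ to agent $1$ and re-equilibrate $(A_2, \ldots, A_n)$ as the MNW allocation on the updated residual cake for $\{2, \ldots, n\}$ (invoking \Cref{thm:resource-monotonicity} to control how the other agents' values evolve relative to the prices). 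Once this procedure terminates, agent $1$'s bundle is a best response at the final prices with spending at most one, and the Fisher-market monotonicity closes the argument.
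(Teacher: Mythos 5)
Your first direction is correct and matches the paper: an MNW allocation satisfies the condition of \Cref{thm:nash-welfare-condition}, which subsumes \Cref{def:week-MNW-allocation}, so the supremum over weakly MNW allocations is at least the MNW value. The hard direction is the converse, and there your plan has a genuine gap at precisely the step where all the difficulty sits. The improvement move you propose --- transfer an infinitesimal piece of $X_s$ from some $j \geq 2$ to agent $1$, then recompute the MNW allocation of $(f_2,\ldots,f_n)$ on the shrunken residual cake --- does not preserve the weakly MNW property. After the move, $v_1(A_1)$ strictly increases while, by \Cref{thm:resource-monotonicity} applied to agents $2,\ldots,n$ on a smaller cake, each $v_j(A_j)$ weakly decreases. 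Hence on every interval $X_t$ with $A_1 \cap X_t \neq \emptyset$, the required inequality $f_1(X_t)/v_1(A_1) \geq f_j(X_t)/v_j(A_j)$ has its left side go down and its right side go up, and it breaks whenever it was tight --- which is exactly the generic situation at a candidate maximizer. Once it breaks, agent $1$ would have to surrender part of what he already holds, and there is no argument that the net effect on $v_1$ is nonnegative or that the process terminates (you are also implicitly taking a limit of infinitely many infinitesimal steps without a potential function).

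The paper's proof resolves this by never decreasing anyone's utility during the improvement: it builds a directed graph of \emph{tight} ratio constraints, extracts a forest rooted at the agents who deserve the contested piece, and performs a single $\epsilon$-perturbation of balanced transfers along the tree edges so that every affected agent's utility is multiplied by exactly the same factor $(1+\epsilon)$; tight constraints therefore stay tight and strict ones survive for small $\epsilon$. (In the case where agent $1$'s slack constraint points into a separate component, a second disjoint tree is scaled down uniformly by $(1-\delta)$, and disjointness is what keeps this safe.) Your Fisher-market endgame also needs more care than stated: realizing the terminal allocation as a weighted Eisenberg--Gale equilibrium at budgets $(B_1,1,\ldots,1)$ requires the budget-weighted conditions $p_t \geq B_i f_i(X_t)/v_i(A_i)$ with equality on held intervals, and the weakly MNW inequalities control the unweighted ratios only, so the equivalence does not follow directly from your price definition. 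As written, the proposal identifies the right obstacle but does not supply the mechanism that overcomes it.
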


The proof of \Cref{lem:value-in-weakly-MNW-allocations} can be found in \ref{sec:proof-of-lem-value-in-weakly-MNW-allocations}, and we are ready to prove the upper bound in \Cref{thm:incentive-ratio-of-MNW}.

\begin{lemma}
    The incentive ratio of the MNW mechanism without the free disposal assumption is at most $2$.
\end{lemma}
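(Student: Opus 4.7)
The plan is to apply \Cref{lem:value-in-weakly-MNW-allocations} by exhibiting, for every misreport $f_1'$, a weakly MNW allocation $\tilde{A}$ under the true profile $(f_1, f_2, \ldots, f_n)$ in which agent $1$'s bundle has true value at least $v_1(A_1')/2$, where $A'$ denotes an MNW allocation under the misreported profile $(f_1', f_2, \ldots, f_n)$. Since \Cref{lem:value-in-weakly-MNW-allocations} gives $v_1(A_1^\ast) \geq v_1(\tilde{A}_1)$ for any MNW allocation $A^\ast$ under the truthful profile, this yields $v_1(A_1^\ast) \geq v_1(A_1')/2$, i.e., the incentive ratio for agent $1$ is at most $2$. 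Throughout, I will work with the common refinement of the partitions induced by the two profiles, so that all of $f_1, f_1', f_2, \ldots, f_n$ are constant on every interval $X_t$, and let $v_1'(S) := \int_S f_1'(x)\,dx$ denote agent $1$'s reported value.

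The allocation $\tilde{A}$ is built from a threshold construction. For $\tau > 0$, let $B_\tau = A_1' \cap \bigcup\{X_t : f_1(X_t) \geq \tau f_1'(X_t)\}$, with the convention that this inequality is vacuously satisfied when $f_1'(X_t) = 0$. The function $v_1(B_\tau)$ is nonincreasing in $\tau$ with $v_1(B_0) = v_1(A_1')$, while $\tau v_1'(A_1')$ is linear and strictly increasing from $0$, so the two curves must cross; I will choose $\tau^\ast$ together with a suitable partial inclusion of any boundary intervals that have ratio exactly $\tau^\ast$ so that the resulting set $B$ satisfies $v_1(B) = \tau^\ast v_1'(A_1')$ exactly. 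Agent $1$ is then assigned $B$, and agents $2, \ldots, n$ receive an MNW allocation of $[0,1] \setminus B$ under their true valuations $(f_2, \ldots, f_n)$.

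Two properties must then be verified. For the size bound $v_1(B) \geq v_1(A_1')/2$: every $X_t$ intersecting $A_1' \setminus B$ satisfies $f_1(X_t) \leq \tau^\ast f_1'(X_t)$, so $v_1(A_1' \setminus B) \leq \tau^\ast v_1'(A_1' \setminus B) \leq \tau^\ast v_1'(A_1') = v_1(B)$, whence $v_1(A_1') \leq 2 v_1(B)$. For the weakly MNW condition of $\tilde{A}$, the inequalities among agents $2, \ldots, n$ are immediate from \Cref{thm:nash-welfare-condition} applied to their MNW allocation on $[0,1] \setminus B$. For the inequality at agent $1$ and any $X_t$ intersecting $B$, I will chain three inequalities: (i) by the construction of $B$, $f_1(X_t)/v_1(B) \geq f_1'(X_t)/v_1'(A_1')$, since $f_1(X_t)/f_1'(X_t) \geq \tau^\ast = v_1(B)/v_1'(A_1')$; (ii) since $A'$ is MNW under $(f_1', f_2, \ldots, f_n)$ and $A_1' \cap X_t \supseteq B \cap X_t \neq \emptyset$, \Cref{thm:nash-welfare-condition} gives $f_1'(X_t)/v_1'(A_1') \geq f_j(X_t)/v_j(A_j')$ for $j \geq 2$; (iii) restricting the MNW condition for $A'$ to agents $2,\ldots,n$ shows that $(A_2', \ldots, A_n')$ is MNW on $[0,1]\setminus A_1'$, and since $[0,1]\setminus B \supseteq [0,1]\setminus A_1'$, resource monotonicity (\Cref{thm:resource-monotonicity}) yields $v_j(\tilde{A}_j) \geq v_j(A_j')$, hence $f_j(X_t)/v_j(A_j') \geq f_j(X_t)/v_j(\tilde{A}_j)$.

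The main technical obstacle will be executing the threshold construction so that the tight equality $v_1(B) = \tau^\ast v_1'(A_1')$ needed in (i), the size bound $v_1(B) \geq v_1(A_1')/2$, and the treatment of intervals with $f_1'(X_t) = 0$ all hold simultaneously; the last is handled by noting that the MNW condition for $A'$ forces $f_j(X_t) = 0$ for all $j \geq 2$ on such intervals, so including them in $B$ is harmless. The remaining ingredients — resource monotonicity, the MNW characterization, and \Cref{lem:value-in-weakly-MNW-allocations} — are invoked as black boxes.
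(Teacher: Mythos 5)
Your proposal is correct and takes essentially the same approach as the paper's proof: both carve out of $A_1'$ a density-ratio-threshold subset worth at least half of $v_1(A_1')$ to agent~$1$, verify via \Cref{thm:nash-welfare-condition} that it extends to a weakly MNW allocation under the true profile, and conclude by combining \Cref{lem:value-in-weakly-MNW-allocations} with \Cref{thm:resource-monotonicity}. The differences are only in the bookkeeping: the paper fixes the threshold at $\frac{v_1(A_1')}{2 v_1'(A_1')}$ and keeps $A_2', \ldots, A_n'$ untouched, so its witness lives on a partial cake and resource monotonicity is invoked at the very end, whereas you pick the threshold by a crossing argument and re-run MNW on $[0,1]\setminus B$, so your witness lives on the full cake and resource monotonicity is invoked inside the verification of the weakly MNW condition; both routes go through.
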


\begin{proof}
    Due to symmetry, it is sufficient to prove the incentive ratio for agent $1$.
    Suppose that the density function of agent $i$ is $f_i$, and agent $1$ misreports his density function as $f_1'$.
    Let $(A_1, \ldots, A_n)$ and $(A_1', \ldots, A_n')$ be MNW allocations with respect to profiles $(f_1, \ldots, f_n)$ and $(f_1', f_2, \ldots, f_n)$, respectively.
    We hope to show that $v_1(A_1') \leq 2 v_1(A_1)$.
    Assume that $v_1(A_1') > 0$, and otherwise the statement trivially holds.
    From now on, let $X_1, \ldots, X_m$ be the partition of $[0, 1]$ induced by $f_1', f_1, \ldots, f_n$ such that the value of each of the density functions is a constant on each interval $X_t$.
    Since $(A_1', \ldots, A_n')$ is an MNW allocation with respect to $(f_1', f_2, \ldots, f_n)$, by \Cref{thm:nash-welfare-condition}, we have the following inequalities:
    \begin{align}
        &\frac{f_1'(X_t)}{v_1'(A_1')} \geq \frac{f_i(X_t)}{v_i(A_i')}, \qquad \text{for each } A_1' \cap X_t \neq \emptyset \text{ and } i = 2, \ldots, n, \text{ and} \label{eqn:weakly-MNW-property1}\\
        &\frac{f_i(X_t)}{v_i(A_i')} \geq \frac{f_j(X_t)}{v_j(A_j')}, \qquad \text{for each } A_i' \cap X_t \neq \emptyset \text{ and } i, j = 2, \ldots, n. \label{eqn:weakly-MNW-property}
    \end{align}

    The idea to prove the lemma is the following: we first find a subset $S \subseteq A_1'$ such that $v_1(S)$ is at least half of $v_1(A_1')$ and allocation $(S, A_2', \ldots, A_n')$ is weakly MNW on cake $S \cup A_2' \cup \ldots \cup A_n'$ with respect to profile $(f_1, \ldots, f_n)$.
    Then combined with \Cref{thm:resource-monotonicity} and \Cref{lem:value-in-weakly-MNW-allocations}, we have the desired conclusion.

    Let 
    \begin{align}\label{eqn:def-by-ratio}
        U = \left\{ x \in A_1' ~\Bigl\vert~ \frac{f_1(x)}{v_1(A_1')} \geq \frac{f_1'(x)}{2v_1'(A_1')} \right\}
    \end{align}
    and
    \begin{align*}
        \overline{U} = A_1' \setminus U = \left\{ x \in A_1' ~\Bigl\vert~ \frac{f_1(x)}{v_1(A_1')} < \frac{f_1'(x)}{2v_1'(A_1')} \right\}.
    \end{align*}
    Hence,
    \begin{align*}
        v_1(\overline{U})
        &= \int_{\overline{U}} f_1(x) dx
        < \int_{\overline{U}} \frac{f_1'(x) v_1(A_1')}{2 v_1'(A_1')} dx
        = \frac{v_1(A_1')}{2 v_1'(A_1')} \int_{\overline{U}} f_1'(x) dx\\
        &= \frac{v_1(A_1')}{2 v_1'(A_1')} v_1'(\overline{U})
        \leq \frac{v_1(A_1')}{2}.
    \end{align*}
    This implies that
    \begin{align*}
        v_1(U) = v_1(A_1') - v_1(\overline{U}) > \frac{v_1(A_1')}{2}.
    \end{align*}
    Since $v_1(U)$ is an integration, we can find a subset $S \subseteq U$ such that
    \begin{align}\label{eqn:half-value}
        v_1(S) = \frac{v_1(A_1')}{2}.
    \end{align}

    Consider the allocation $(S, A_2', \ldots, A_n')$ on the partial cake $S \cup A_2' \cup \ldots \cup A_n'$ for profile $(f_1', f_2, \ldots, f_n)$.
    For each interval $X_t$ such that $S \cap X_t \neq \emptyset$, we have
    \begin{align}\label{eqn:cond-for-partial-MNW}
        \frac{f_1(X_t)}{v_1(S)}
        = \frac{2 f_1(X_t)}{v_1(A_1')}
        \geq \frac{f_1'(X_t)}{v_1'(A_1')}
        \geq \frac{f_i(X_t)}{v_i(A_i')}
        \qquad \text{for } i = 2, \ldots, n,
    \end{align}
    where the inequalities from left to right follow from \eqref{eqn:half-value}, \eqref{eqn:def-by-ratio} and \eqref{eqn:weakly-MNW-property1}.
    Moreover, \eqref{eqn:weakly-MNW-property} still holds for $(S, A_2', \ldots, A_n')$ since $A_2', \ldots, A_n'$ remain unchanged.
    Coupled with \eqref{eqn:cond-for-partial-MNW}, we know that $(S, A_2', \ldots, A_n')$ is a weakly MNW allocation on $S \cup A_2' \cup \ldots \cup A_n'$ with respect to profile $(f_1, \ldots, f_n)$.

    By \Cref{thm:resource-monotonicity}, we know that $v_1(A_1)$ is no less than the value agent $1$ receives in an MNW allocation on the cake $S \cup A_2' \cup \ldots \cup A_n'$, which in turn, by \Cref{lem:value-in-weakly-MNW-allocations}, is no less than $v_1(S)$ as agent $1$ receives in the weakly MNW allocation.
    As a result, $v_1(A_1) \geq v_1(S) = \frac{v_1(A_1')}{2}$ by \eqref{eqn:half-value}, which yields our desired result.
\end{proof}

Next, we show that our upper bound is tight to finish the proof of \Cref{thm:incentive-ratio-of-MNW}.
In particular, we prove that the incentive ratio of the MNW mechanism is at least $2 - 1 / n$, which approaches $2$ as $n \to \infty$.
Notably, the hard instance provided by \cite{DBLP:conf/icalp/ChenDZZ12} only works for the more general setting in which agents have different \emph{endowments} and thus cannot be applied here.

\begin{lemma}\label{lem:lower-bound-for-MNW-mechanism}
    The incentive ratio of the MNW mechanism is at least $2 - 1 / n$, even with the free disposal assumption.
\end{lemma}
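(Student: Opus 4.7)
My plan is to exhibit, for each $n \ge 2$, an $n$-agent instance together with a misreport whose utility ratio approaches $2 - 1/n$. Partition the cake $[0,1]$ into $n$ equal intervals $X_1, \ldots, X_n$, let agent $1$'s true density be the constant $n$ (so $v_1(X_k) = 1$ for every $k$), and for each $i \ge 2$ let $f_i$ equal $n$ on $X_{i-1}$ and $0$ elsewhere (so $v_i(X_{i-1}) = 1$ captures all of agent $i$'s value). Since agent $1$'s density is positive everywhere, no part of the cake is ever unvalued, so the instance is valid with or without the free disposal assumption.

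I first verify that every truthful MNW allocation gives agent $1$ utility exactly $1$. The candidate allocation that assigns $X_{i-1}$ entirely to agent $i$ for $i \ge 2$ and $X_n$ entirely to agent $1$ gives every agent utility $1$ and trivially satisfies \Cref{thm:nash-welfare-condition}, with every deserving ratio equal to $n$. To rule out other MNW allocations, observe that only agent $1$ values $X_n$, so agent $1$ must receive all of $X_n$; and only agents $1$ and $k+1$ value $X_k$ for $k \le n-1$. If agent $1$ were to receive a positive share of some such $X_k$, then \eqref{cond:nash-welfare} would force $v_1 \le v_{k+1}$, but such a transfer would make $v_1 > 1 > v_{k+1}$, a contradiction. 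Hence by \Cref{coro:uniqueness-mnw}, $v_1 = 1$ in every truthful MNW allocation.

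For the misreport, I let $f_1'$ equal $nM$ on $X_1 \cup \cdots \cup X_{n-1}$ and $n$ on $X_n$ for a large parameter $M$. Exploiting the symmetry of $X_1, \ldots, X_{n-1}$, I look for an MNW allocation in which agent $1$ takes a common fraction $a$ of each $X_k$ with $k \le n-1$, agent $k+1$ takes the remaining $1-a$, and agent $1$ takes all of $X_n$. The equal-ratio form of \eqref{cond:nash-welfare} applied to each such $X_k$ becomes
\[
    \frac{nM}{(n-1)Ma + 1} \;=\; \frac{n}{1 - a},
\]
whose unique solution is $a = (M-1)/(nM)$; the condition at $X_n$ is automatic since no other agent values $X_n$. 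By \Cref{thm:nash-welfare-condition,coro:uniqueness-mnw} this is the MNW outcome, and agent $1$'s \emph{true} utility is $(n-1)a + 1 = (n-1)(M-1)/(nM) + 1$, which tends to $2 - 1/n$ as $M \to \infty$. Dividing by the truthful utility of $1$ yields the claimed lower bound. The only delicate step is the uniqueness of the truthful MNW utility: without it, adversarial tie-breaking could in principle inflate $v_1$ above $1$ and weaken the ratio, but this is precisely what the tight inequality check via \eqref{cond:nash-welfare} handles.
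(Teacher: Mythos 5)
Your proposal is correct, and it reaches the $2-1/n$ bound via a genuinely different instance from the paper's. The paper uses a fixed three-block cake (for every $n$): agent $1$ \emph{understates} his value on his exclusive region $[0,1]$ (reporting density $\epsilon$), which forces the MNW first-order conditions to hand him essentially all of the contested middle block as well, while the last block keeps the other agents' utilities afloat. You instead use $n$ blocks and have agent $1$ \emph{overstate} (by a factor $M\to\infty$) his value on each other agent's exclusive block, thereby skimming a $1/n$-length slice off each of the $n-1$ blocks while retaining his own; the limit $(n-1)\cdot\frac{1}{n}+1=2-\frac1n$ emerges the same way. Both constructions live in the homogeneous-items subcase (no breakpoint manipulation) and both rely on \Cref{thm:nash-welfare-condition} to verify the allocations and on \Cref{coro:uniqueness-mnw} to make the bound robust to tie-breaking. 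One small remark: the delicate uniqueness point is not only the truthful utility (which \Cref{coro:uniqueness-mnw} settles directly) but also agent $1$'s \emph{true} utility under the misreported profile, since the corollary only pins down utilities with respect to the reported densities. In your instance this is harmless --- each agent $k+1$'s utility $1-a$ is pinned down, which fixes the length agent $1$ receives inside each $X_k$, and all relevant densities are constant on $X_k$, so his true value is determined --- but it is worth stating explicitly; the paper's proof has the same implicit step.
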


\begin{proof}
    For simplicity, we assume by scaling that the cake is represented by interval $[0, 3]$.
    Suppose that there are $n$ agents and their density functions are as follows:
    \begin{align*}
        f_1(x) =
        \begin{cases}
            n, & x \in [0, 1],\\
            n - 1, & x \in (1, 2],\\
            0, & x \in (2, 3],
        \end{cases}
        \quad \text{and} \quad
        f_i(x) = 
        \begin{cases}
            0, & x \in [0, 1],\\
            1, & x \in (1, 2],\\
            n - 1, & x \in (2, 3]
        \end{cases}
        \quad \text{for } i \geq 2.
    \end{align*}
    It is easy to verify using \Cref{thm:nash-welfare-condition} that an MNW allocation is the following: $A_1 = [0, 1]$, and $(1, 2]$ and $(2, 3]$ are evenly distributed among agents $2, \ldots, n$.

    Let $\epsilon > 0$ be a sufficiently small real number, and suppose that agent $1$ misreports his density function as
    \begin{align*}
        f_1'(x) =
        \begin{cases}
            \epsilon, & x \in [0, 1], \\
            1, & x \in (1, 2], \\
            n - 1, & x \in (2, 3].
        \end{cases}
    \end{align*}
    Note that an MNW allocation is the following: agent $1$ receives $[0, 2 - \frac{n-1}{n}\epsilon]$, and $(2 - \frac{n-1}{n}\epsilon, 2]$ and $(2, 3]$ are evenly distributed among agents $2, \ldots, n$.
    To verify using \Cref{thm:nash-welfare-condition} that this allocation, denoted as $(A_1', \ldots, A_n')$, is MNW with respect to $(f_1', f_2, \ldots, f_n)$, simply notice that
    \begin{align*}
        v_1'(A_1') = v_2(A_2') = \ldots = v_{n}(A_{n}') = 1 + \frac{\epsilon}{n}
    \end{align*}
    and the entire interval of $[0, 1]$ belongs to agent $1$.
    Therefore, the incentive ratio of the MNW mechanism is at least
    \begin{align*}
        \frac{v_1(A_1')}{v_1(A_1)}
        = \frac{2n - 1 - \frac{(n-1)^2}{n}\epsilon}{n},
    \end{align*}
    which approaches $2 - 1 / n$ as $\epsilon \to 0$.
\end{proof}
\section{The Partial Allocation Mechanism}

The Partial Allocation (PA) mechanism proposed by \cite{DBLP:conf/sigecom/ColeGG13} is truthful and $1 / e$-MNW for homogeneous divisible items.
In this section, we explore the incentive property of the PA mechanism in cake cutting.
Specifically, we show that the PA mechanism under the cake cutting setting has an incentive ratio between $e^{1/e}$ and $e$.

We start by describing the PA mechanism, which consists of three steps:
\begin{enumerate}
    \item Compute an MNW allocation $A$ based on the reported density functions.

    \item For each agent $i$, compute an MNW allocation $A^i = (A^i_1, \ldots, A^i_{i - 1}, A^i_{i + 1}, \ldots, A_n)$ that would arise in his absence.

    \item \label{item:step-3-of-PA-mechanism} For each interval $X_t$, allocate to each agent $i$ the rightmost fraction $y_i$ of $A_i \cap X_t$, where
    \begin{align*}
        y_i = \frac{\prod_{j \neq i}v_{j}(A_{j})}{\prod_{j \neq i} v_{j}(A^i_{j})}.
    \end{align*}
\end{enumerate}
We remark here that the chosen portion to be allocated in Step~\ref{item:step-3-of-PA-mechanism} can be arbitrary.

\begin{theorem}[\cite{DBLP:conf/sigecom/ColeGG13}]\label{thm:PA-mechanism-for-divisible-items}
    For homogeneous divisible items, the PA mechanism is truthful.
    Moreover, we have $1 / e \leq y_i \leq 1$, and the lower bound is tight.
\end{theorem}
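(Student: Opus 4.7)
The plan is to handle truthfulness, then the two-sided bound $1/e \leq y_i \leq 1$, and finally the tightness of the lower bound, with the lower bound as the main technical challenge.

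For truthfulness, I fix agent $i$ and a possible misreport $f_i'$. Let $A'$ denote the MNW allocation with respect to the misreported profile $(f_1, \ldots, f_i', \ldots, f_n)$ and $y_i'$ the corresponding fraction. Agent $i$'s true utility is
\begin{align*}
    y_i' \cdot v_i(A_i') = \frac{v_i(A_i') \cdot \prod_{j \neq i} v_j(A_j')}{\prod_{j \neq i} v_j(A_j^i)}.
\end{align*}
The key observation is that the denominator depends only on the other agents' reports and is therefore constant as $f_i'$ varies. Agent $i$ thus effectively maximizes the numerator, which is exactly the Nash product of the allocation $A'$ evaluated under the true profile. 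Since the MNW allocation $A$ associated with truthful reporting is by definition the feasible allocation maximizing this quantity, setting $f_i' = f_i$ attains the maximum.

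For the upper bound $y_i \leq 1$, I note that any $(n-1)$-agent MNW can mimic the $n$-agent MNW restricted to agents $j \neq i$ (discarding agent $i$'s bundle via free disposal), so $\prod_{j \neq i} v_j(A_j^i) \geq \prod_{j \neq i} v_j(A_j)$. For the lower bound $y_i \geq 1/e$, I would exploit the CEEI characterization of MNW for piecewise constant valuations. Introduce a parameter $b \in [0, 1]$ for agent $i$'s budget (with all other agents' budgets kept at $1$) and consider the weighted Eisenberg-Gale program $\max \bigl[\sum_{j \neq i} \log v_j(B_j) + b \log v_i(B_i)\bigr]$, with optimal value $F(b)$ and optimal allocation $A(b)$. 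By the envelope theorem, $F'(b) = \log v_i(A_i(b))$, so
\begin{align*}
    F(1) - F(0) = \int_0^1 \log v_i(A_i(b)) \, db.
\end{align*}
The key estimate to establish is $v_i(A_i(b)) \geq b \cdot v_i(A_i(1))$ for all $b \in (0, 1]$, justified by a scaling argument: uniformly rescaling all budgets leaves the CEEI invariant, so reducing only agent $i$'s budget from $1$ to $b$ degrades agent $i$'s utility by at most a factor of $b$. Substituting this bound into the integral and rearranging yields $\log y_i \geq -1$, i.e., $y_i \geq 1/e$.

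The main obstacle is making the budget-monotonicity estimate rigorous, since equilibrium prices can shift non-uniformly when a single agent's budget decreases, and one must control both the prices and the resulting allocation as $b$ varies. Finally, for tightness of the $1/e$ bound, I would construct a sequence of symmetric instances with $n$ agents whose densities are arranged so that removing any single agent redistributes their bundle essentially uniformly among the remaining $n-1$ agents. A direct calculation then gives $y_i = (1 - 1/n)^{n-1}$, which tends to $1/e$ as $n \to \infty$.
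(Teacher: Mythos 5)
This theorem is imported from Cole--Gkatzelis--Goel and the paper never proves it; the only piece it reproves is the truthfulness core, as \Cref{lem:property-for-PA-mechanism}. Your truthfulness argument is exactly that lemma plus the (correct, but worth stating) observation that for \emph{homogeneous} items a $y_i'$-fraction of each piece of $A_i'$ has true value exactly $y_i'\cdot v_i(A_i')$ --- this is precisely the step that breaks in cake cutting. Your $y_i\le 1$ argument and your tightness construction (uniform valuations, $y_i=(1-1/n)^{n-1}\to 1/e$) are both fine.

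The genuine gap is in the $1/e$ lower bound. The envelope-theorem skeleton is sound: with $F(b)$ the weighted Eisenberg--Gale value one indeed gets $\log v_i(A_i)+\log y_i=\int_0^1\log v_i(A_i(b))\,db$, and the estimate $v_i(A_i(b))\ge b\cdot v_i(A_i(1))$ would finish. But your justification of that estimate does not work. Uniformly rescaling \emph{all} budgets by $b$ leaves the allocation invariant; passing from budgets $(b,\dots,b)$ to $(b,1,\dots,1)$ then \emph{raises} the other agents' budgets, which can only hurt agent $i$, so the scaling argument yields an inequality in the wrong direction. What the estimate actually requires is a monotonicity property of linear Fisher markets: writing $v_i(A_i(b))=b\cdot\max_t f_i(X_t)/p_t(b)$, one needs that lowering agent $i$'s budget does not increase the price of his best bang-per-buck good (e.g., via the known fact that equilibrium prices are monotone in budgets). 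That is a nontrivial lemma you would have to prove, not a corollary of scale invariance. Alternatively, the original proof avoids this entirely with a combinatorial charging argument: at the $n$-agent equilibrium prices, removing agent $i$ frees one unit of spending, which is redistributed as amounts $x_j$ with $\sum_{j\ne i}x_j\le 1$, giving $\prod_{j\ne i}v_j(A_j^i)\le\prod_{j\ne i}v_j(A_j)(1+x_j)\le e\cdot\prod_{j\ne i}v_j(A_j)$. Either route closes the gap; as written, your proof of $y_i\ge 1/e$ is incomplete.
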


The following lemma is an essential property of the PA mechanism, which is implicit in \cite{DBLP:conf/sigecom/ColeGG13} and will be used repeatedly.
For completeness, we also provide a proof.

\begin{lemma}\label{lem:property-for-PA-mechanism}
    For every agent $i$, profile $(f_1, \ldots, f_n)$ and density function $f_i'$, define $A$ and $y_i$ as the MNW allocation and parameter computed by the PA mechanism for profile $(f_1, \ldots, f_n)$, and define $A'$ and $y_i'$ analogously for profile $(f_1, \ldots, f_i', \ldots, f_n)$.
    Then we have $y_i \cdot v_i(A_i) \geq y_i' \cdot v_i(A_i')$.
\end{lemma}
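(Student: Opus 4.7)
The plan is to unfold the definition of the PA parameters $y_i$ and $y_i'$, cancel a common denominator, and then reduce the inequality to the Nash-welfare-maximizing property of $A$. The starting observation is that on each piecewise-constant interval $X_t$ the density of agent $i$ is constant, so taking the rightmost $y_i$ fraction of $A_i \cap X_t$ contributes exactly $y_i \cdot v_i(A_i \cap X_t)$ to agent $i$'s utility. Summing over $t$, the quantity $y_i \cdot v_i(A_i)$ is literally agent $i$'s utility (under his true density $f_i$) in the PA mechanism applied to profile $(f_1, \ldots, f_n)$, and similarly $y_i' \cdot v_i(A_i')$ is his (true) utility when misreporting $f_i'$.

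Next, I plug in the definition of $y_i$ to obtain the telescoping form
\begin{align*}
    y_i \cdot v_i(A_i) \;=\; \frac{\prod_{j=1}^n v_j(A_j)}{\prod_{j \neq i} v_j(A_j^i)},
    \qquad
    y_i' \cdot v_i(A_i') \;=\; \frac{v_i(A_i') \prod_{j \neq i} v_j(A_j')}{\prod_{j \neq i} v_j(A_j^{\prime i})},
\end{align*}
where in the second expression I use that for $j \neq i$ the reported and true density functions coincide, so $v_j'$ may be replaced by $v_j$. The central observation is that $A^i$ and $A^{\prime i}$ are both MNW allocations computed in the instance where agent $i$ is \emph{absent}; this instance uses the profile $(f_j)_{j \neq i}$ in either case and is insensitive to whether agent $i$ would have reported $f_i$ or $f_i'$. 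By \Cref{coro:uniqueness-mnw}, the value received by each remaining agent is uniquely determined, so $v_j(A_j^i) = v_j(A_j^{\prime i})$ for every $j \neq i$ even if ties are broken differently. Hence the two denominators above are identical and cancel.

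After cancellation, the lemma reduces to the bare inequality
\begin{align*}
    \prod_{j=1}^n v_j(A_j) \;\geq\; v_i(A_i') \prod_{j \neq i} v_j(A_j'),
\end{align*}
which simply compares the $n$-th powers of the Nash welfares of $A$ and $A'$ \emph{both evaluated under the true profile} $(v_1, \ldots, v_n)$. Since $A$ is an MNW allocation with respect to the true profile by construction, and $A'$ is just another feasible allocation, this is immediate from the maximality of $A$, concluding the proof.

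I do not foresee a serious technical obstacle: the only subtlety is the appeal to \Cref{coro:uniqueness-mnw} to justify that the denominators depend only on the profile restricted to $N \setminus \{i\}$, even when multiple MNW allocations exist. Everything else is algebraic manipulation, and the argument carries over verbatim from the homogeneous-items setting because piecewise-constant densities make the utility collected by taking a $y_i$-fraction of each piece equal to $y_i \cdot v_i(A_i)$.
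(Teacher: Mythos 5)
Your proposal is correct and follows essentially the same route as the paper: cancel the common denominator $\prod_{j\neq i} v_j(A^i_j)$ (which is independent of agent $i$'s report) and reduce the claim to the Nash-welfare maximality of $A$ under the true profile. Your extra appeal to \Cref{coro:uniqueness-mnw} to handle possible tie-breaking in computing $A^i$ versus $A'^i$ is a slightly more careful justification of a step the paper states without comment, but the argument is the same.
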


\begin{proof}
    Since the allocation $A^i$ does not depend on the reported density function of agent $i$, the denominators of $y_i$ and $y_i'$ are the same.
    Thus, it suffices to prove that
    \begin{align*}
        v_i(A_i) \cdot \prod_{j \neq i} v_j(A_j) \geq v_i(A_i') \cdot \prod_{j \neq i} v_j(A_j'),
    \end{align*}
    which holds because $A$ is the allocation that maximizes $\prod_{i \in N} v_i(A_i)$.
\end{proof}

The result of $1 / e \leq y_i \leq 1$ in \Cref{thm:PA-mechanism-for-divisible-items} can be easily carried over to the cake cutting setting by noticing that, with the absence of agents' strategic behaviors, a cake cutting instance is equivalent to a homogeneous divisible items instance with each interval $X_t$ corresponding to an item.
For the incentive ratio of the PA mechanism in cake cutting, we show that it is between $[e^{1/e}, e]$.

\begin{theorem}\label{thm:ic-pa-cake}
    The incentive ratio of the PA mechanism in cake cutting is between $[e^{1/e}, e]$.
\end{theorem}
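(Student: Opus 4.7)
The plan is to derive the upper bound $e$ directly from \Cref{lem:property-for-PA-mechanism} together with the $y_i' \ge 1/e$ bound carried over from \Cref{thm:PA-mechanism-for-divisible-items}, and to exhibit an $n$-agent family of instances whose incentive ratio tends to $e^{1/e}$ as $n \to \infty$. For the upper bound, fix an agent $i$, a truthful profile $(f_1, \ldots, f_n)$, and a manipulation $f_i'$. Under the manipulation, agent $i$'s realized bundle consists of the rightmost $y_i'$-fraction of $A_i' \cap X_t'$ in each interval $X_t'$ of the new partition, hence is contained in $A_i'$, so the realized utility is at most $v_i(A_i')$. The truthful utility equals $y_i v_i(A_i)$, so combining $y_i v_i(A_i) \ge y_i' v_i(A_i')$ from \Cref{lem:property-for-PA-mechanism} with $y_i' \ge 1/e$ yields a ratio of at most $v_i(A_i')/(y_i v_i(A_i)) \le 1/y_i' \le e$.

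For the lower bound, let agents $2, \ldots, n$ have uniform density $f_j \equiv 1$, and let agent $1$'s true density be $f_1 = (1/\epsilon)\mathbf{1}_{[1-\epsilon, 1]}$ for some small $\epsilon < 1/n$; agent $1$ manipulates to $f_1' = 1 + \eta\, \mathbf{1}_{[1 - 1/n, 1]}$ with tiny $\eta > 0$. Using \Cref{thm:nash-welfare-condition}, I will verify: (i) the truthful MNW uniquely gives $A_1 = [1-\epsilon, 1]$ with $v_1(A_1) = 1$ and $v_j(A_j) = (1-\epsilon)/(n-1)$ for $j \ne 1$, so that $y_1 = (1-\epsilon)^{n-1}$ and the truthful utility equals $(1-\epsilon)^{n-1}$; (ii) for every $\eta > 0$ the manipulated MNW uniquely gives $A_1' = [1 - 1/n, 1]$ with $v_j(A_j') = 1/n$ for $j \ne 1$, so that $y := y_1' = ((n-1)/n)^{n-1}$. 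Since the breakpoint at $1 - 1/n$ introduced by $f_1'$ places all of $A_1'$ inside a single cell of the new partition, the rightmost $y_1'$-fraction is precisely $[1 - y/n, 1]$; choosing $\epsilon \le y/n$ yields $[1-\epsilon, 1] \subseteq [1 - y/n, 1]$, so the realized utility under the manipulation equals $\int_{1-\epsilon}^{1} (1/\epsilon)\, dx = 1$. Hence the incentive ratio is at least $1/(1-\epsilon)^{n-1}$, and letting $\epsilon \uparrow y/n$ and $n \to \infty$ gives $1/(1 - y/n)^{n-1} \to e^{y} \to e^{1/e}$.

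The main obstacle is claim (ii): for $\eta = 0$ the manipulated MNW is highly degenerate, since any length-$1/n$ window for agent $1$ achieves the same Nash product. I rely on the bonus $\eta > 0$ on $[1 - 1/n, 1]$ to break this degeneracy by strictly favoring placements of $A_1'$ that maximize overlap with $[1 - 1/n, 1]$; I will verify uniqueness by parameterizing $A_1'$ as a shifted length-$1/n$ window and showing the Nash product is strictly maximized at the rightmost placement. Once this alignment is secured, the remainder reduces to the routine limit $(1 - y/n)^{-(n-1)} \to e^{y} \to e^{1/e}$.
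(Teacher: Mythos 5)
Your proposal is correct and follows essentially the same route as the paper: the upper bound is the identical argument via \Cref{lem:property-for-PA-mechanism} and $y_1' \ge 1/e$, and your lower-bound instance is the paper's construction up to rescaling (the paper uses cake $[0,n]$ with agent $1$ valuing $[1-h,1]$, $h=((n-1)/n)^{n-1}$, and misreporting uniform on $[0,1]$ so that the discarded left portion is worthless to him). The only cosmetic difference is how the degenerate manipulated MNW is avoided — you add a tiny bump $\eta$ on $[1-1/n,1]$ to pin down $A_1'$, whereas the paper restricts the misreported support to $[0,1]\subset[0,n]$, which forces $A_1'=[0,1]$ outright — and both yield the same ratio $(1-y/n)^{-(n-1)}\to e^{1/e}$.
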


\begin{proof}
    We prove the upper and lower bounds separately.

    \paragraph{Upper bound}
    Due to symmetry, it is sufficient to prove the incentive ratio for agent $1$.
    Suppose that the density function of agent $i$ is $f_i$ and agent $1$ misreports his density function as $f_1'$.
    Define $A$ and $y_1$ as the MNW allocation and parameter computed by the PA mechanism for profile $(f_1, \ldots, f_n)$, and define $A'$ and $y_1'$ analogously for profile $(f_1', f_2, \ldots, f_n)$.
    By \Cref{lem:property-for-PA-mechanism}, we have
    \begin{align}\label{eqn:MNW-condition-in-PA-mechanism}
        y_1 \cdot v_1(A_1) \geq y_1' \cdot v_1(A_1').
    \end{align}
    Notice that when agent $1$ reports truthfully, his received value is exactly $y_1 \cdot v_1(A_1)$.
    On the other hand, when agent $1$ misreports, his received value with respect to his true valuation is at most $v_1(A_1')$.
    Therefore, the incentive ratio of the PA mechanism for agent $1$ is at most
    \begin{align*}
        \frac{v_1(A_1')}{y_1 \cdot v_1(A_1)}
        \leq \frac{1}{y_1'}
        \leq e,
    \end{align*}
    where the first inequality is by \eqref{eqn:MNW-condition-in-PA-mechanism} and the second inequality is because $y_1' \geq 1 / e$.

    \paragraph{Lower bound}
    We prove the lower bound by presenting an example.
    For convenience, assume by scaling that the cake is represented by $[0, n]$.
    Let $h = \left( \frac{n - 1}{n} \right)^{n - 1}$.
    The density function of agent $1$ is
    \begin{align*}
        f_1(x) = 
        \begin{cases}
            1, & x \in [1-h, 1],\\
            0, & \text{otherwise},
        \end{cases}
    \end{align*}
    and the density functions of agents $i = 2, \ldots, n$ satisfy $f_i(x) = 1$ for $x \in [0, n]$.
    Note that an MNW allocation for profile $(f_1, \ldots, f_n)$ is as follows: agent $1$ receives $[1 - h, 1]$, and the rest of the cake is evenly distributed among other agents.
    Moreover, an MNW allocation for profile $(f_2, \ldots, f_n)$ is obtained by evenly distributing the entire cake among agents $2, \ldots, n$.
    Thus, when all agents report truthfully, we have
    \begin{align*}
        y_1 = \frac{\left( \frac{n - h}{n - 1} \right)^{n - 1}}{\left( \frac{n}{n - 1} \right)^{n - 1}}
        = \left(1 - \frac{h}{n} \right)^{n - 1},
    \end{align*}
    and the value received by agent $1$ under the PA mechanism is $h \cdot y_1$.
    
    Now suppose that agent $1$ misreports his density function as
    \begin{align*}
        f_1'(x) =
        \begin{cases}
            1, & x \in [0, 1], \\
            0, & \text{otherwise}.
        \end{cases}
    \end{align*}
    Note that an MNW allocation for profile $(f_1', f_2, \ldots, f_n)$ is obtained by allocating each agent $i$ with interval $[i - 1, i]$.
    As a result,
    \begin{align*}
        y_1' = \frac{1}{\left( \frac{n}{n - 1} \right)^{n - 1}}
        = \left( \frac{n - 1}{n} \right)^{n - 1} = h.
    \end{align*}
    Thus, agent $1$ gets $[1 - h, 1]$ under the PA mechanism, and the value received by agent $1$ with respect to his true valuation is $h$.
    Therefore, the incentive ratio of the PA mechanism is at least
    \begin{align*}
        \frac{h}{h \cdot y_1}
        = \left( 1 - \frac{h}{n} \right)^{1 - n},
    \end{align*}
    which approaches $e^{1 / e}$ as $n \to \infty$.
\end{proof}

The proof of the lower bound in \Cref{thm:ic-pa-cake} reveals a substantial difference between the division of homogeneous divisible items and cake.
We have mentioned that, without strategic consideration, a cake cutting instance is equivalent to a homogeneous divisible items instance with each $X_t$ corresponding to an item.
However, with the strategic consideration, an agent can ``merge'' two items.
In the proof of the lower bound, agent $1$ merges the two items $[0,1-h]$ and $(1-h,1]$ by misreporting, and the mechanism only sees one item $[0,1]$.
If the PA mechanism always discards the $(1-y_i)$ fraction of the item ``from the left'', such a merging is beneficial as the left-hand side portion with zero value happens to be discarded.
This is why the PA mechanism, while being truthful in the homogeneous divisible items setting, has an incentive ratio of greater than $1$ (i.e., fails to be truthful) in the cake cutting setting.

\subsection{Randomization}
\label{sec:randomized-PA-mechanism}

Note that in the aforementioned realization of the PA mechanism, the discarded portion of each interval is deterministically chosen.
If randomization is allowed, then we show that by randomly discarding resources, the PA mechanism is truthful in expectation in cake cutting.
Specifically, in Step~\ref{item:step-3-of-PA-mechanism}, for each interval $X_t$ and each agent $i$, assume without loss of generality that $X_t \cap A_i$ is an interval, denoted as $[\ell, r]$.
To allocate a fraction $y_i$ of $[\ell, r]$ to agent $i$, we select a starting point $s \in [\ell, r]$ uniformly at random, and then allocate the subinterval starting from $s$ with length $y_i \cdot (r - \ell)$ to agent $i$ (treat the interval $[\ell, r]$ as a cycle).
We call the modified PA mechanism as the randomized PA mechanism.

\begin{theorem}\label{thm:randomized-PA-mechanism}
    The randomized PA mechanism is truthful in expectation.
\end{theorem}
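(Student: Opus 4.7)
The plan is to reduce truthfulness in expectation to \Cref{lem:property-for-PA-mechanism}. Observe that in the deterministic variant of the PA mechanism, the inequality $y_1 \cdot v_1(A_1) \geq y_1' \cdot v_1(A_1')$ established in the proof of \Cref{thm:ic-pa-cake} compares the \emph{fraction-weighted MNW value} under truthful report and misreport, but the utility an agent actually collects is the $f_1$-integral over a particular \emph{chosen} subset of $A_1'$ of measure $y_1' \cdot |A_1'|$, which can exceed $y_1' \cdot v_1(A_1')$ when the mechanism happens to discard low-value portions. The randomization is designed precisely so that, in expectation, the chosen subset captures exactly a $y$-fraction of the true value of each allocated piece.

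Concretely, fix an agent $i$ and consider any reported profile. Let $A$ and $y_i$ be the MNW allocation and parameter computed by the PA mechanism. For each interval $X_t$ of the induced partition, write $X_t \cap A_i = [\ell, r]$ as in the statement. When $s$ is drawn uniformly from $[\ell, r]$ and agent $i$ receives the cyclic arc of length $y_i (r - \ell)$ starting from $s$, a direct computation shows that for every fixed $x \in [\ell, r]$, the event $\{x \in \text{arc}\}$ corresponds to $s$ lying in a cyclic sub-arc of length $y_i(r - \ell)$, so
\begin{align*}
    \Pr[x \in \text{arc}] = y_i.
\end{align*}
By Fubini/linearity of expectation applied to $f_i$ (agent $i$'s \emph{true} density), the expected true value of the arc equals $y_i \cdot v_i([\ell, r])$. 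Summing over all intervals $X_t$ yields
\begin{align*}
    \E[v_i(M_i(\bdf))] = y_i \cdot v_i(A_i),
\end{align*}
where $\bdf$ is the reported profile and $v_i$ is computed with respect to the true density $f_i$.

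With this identity in hand, the proof concludes in one line. Applying it once with the truthful profile $(f_1, \ldots, f_n)$, giving $\E[v_i] = y_i \cdot v_i(A_i)$, and once with the misreport $(f_1, \ldots, f_i', \ldots, f_n)$, giving $\E[v_i] = y_i' \cdot v_i(A_i')$, \Cref{lem:property-for-PA-mechanism} produces $y_i \cdot v_i(A_i) \geq y_i' \cdot v_i(A_i')$, which is exactly truthfulness in expectation for agent $i$. The main (and only real) obstacle is the probability identity $\Pr[x \in \text{arc}] = y_i$; once the cyclic viewpoint is set up, it is a standard symmetry argument, and everything else is bookkeeping plus a direct invocation of \Cref{lem:property-for-PA-mechanism}. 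Note in particular that we do not need $f_i$ to be constant on each $X_t$ of the mechanism's partition (only the \emph{reported} densities are), since linearity of expectation handles any measurable $f_i$ on $[\ell, r]$.
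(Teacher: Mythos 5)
Your proposal is correct and follows essentially the same route as the paper: both establish that each point of the allocated piece is kept with probability exactly $y_i$ (hence the expected true value is $y_i \cdot v_i(A_i)$, resp. $y_i' \cdot v_i(A_i')$), and then conclude by invoking \Cref{lem:property-for-PA-mechanism}. You merely spell out the cyclic-arc probability computation that the paper dismisses as ``easy to see,'' which is a fine addition but not a different argument.
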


\begin{proof}
    Due to symmetry, it is sufficient to prove the truthfulness for agent $1$.
    Suppose that the density function of agent $i$ is $f_i$ and agent $1$ misreports his density function as $f_1'$.
    Define $A$ and $y_1$ as the MNW allocation and parameter computed by the randomized PA mechanism for profile $(f_1, \ldots, f_n)$, and define $A'$ and $y_1'$ analogously for profile $(f_1', f_2, \ldots, f_n)$.
    By \Cref{lem:property-for-PA-mechanism}, we have
    \begin{align*}
        y_1 \cdot v_1(A_1) \geq y_1' \cdot v_1(A_1').
    \end{align*}
    Notice that when agent $1$ reports truthfully, his received value is exactly $y_1 \cdot v_1(A_1)$.
    On the other hand, when agent $1$ misreports, it is easy to see that the probability of each point in $A_1'$ to be allocated to agent $1$ is $y_1'$.
    Thus, the expected value received by agent $1$ is
    \begin{align*}
        \int_{A_1'} y_1' \cdot f_1(x) dx
        = y_1' \cdot v_1(A_1').
    \end{align*}
    Therefore, the incentive ratio of the randomized PA mechanism is
    \begin{align*}
        \frac{y_1' \cdot v_1(A_1')}{y_1 \cdot v_1(A_1)} \leq 1,
    \end{align*}
    which implies that it is truthful in expectation for agent $1$.
\end{proof}
\section{Interpolation}
\label{sec:interpolation}

Recall that the MNW mechanism and the PA mechanism have been proven to provide different trade-offs between incentive ratio and Nash welfare for both cake cutting and homogeneous divisible items.
In this section, our goal is to establish an interpolation between these two trade-offs by properly choosing the fractions of discarded resources in the (randomized) PA mechanism.

\subsection{Cake Cutting}

So far, in cake cutting, we have provided two alternatives to establish a trade-off between incentive ratio and Nash welfare.
In particular,
\begin{enumerate}
    \item the MNW mechanism has an incentive ratio of $2$ and is $1$-MNW, and
    \item the randomized PA mechanism has an incentive ratio of $1$ and is $1 / e$-MNW.
\end{enumerate}
In the following, we achieve an interpolation between these two trade-offs.

\begin{theorem}\label{thm:interpolation-for-cake}
    For every $c \in [0, 1]$, there exists a (randomized) mechanism in cake cutting with incentive ratio $2^{1 - c}$ and $e^{-c}$-MNW guarantee.
    In particular, the mechanism is deterministic if and only if $c = 0$.
\end{theorem}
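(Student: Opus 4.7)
The plan is to define a one-parameter family of mechanisms $\{M_c\}_{c \in [0,1]}$ that interpolates between the MNW mechanism ($c = 0$) and the randomized PA mechanism ($c = 1$). Given a reported profile, $M_c$ first computes the MNW allocation $A$ and the PA parameters $y_i$ from Section 4; then, for each agent $i$, it invokes the random-offset allocation scheme of \Cref{sec:randomized-PA-mechanism} to assign a fraction $y_i^c$ (rather than $y_i$) of each piece of $A_i$ to agent $i$. At $c = 0$ we have $y_i^0 = 1$, so $M_0$ returns $A$ itself deterministically and coincides with the MNW mechanism; at $c = 1$, $M_1$ coincides with the randomized PA mechanism.

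The Nash welfare guarantee is immediate: by \Cref{thm:PA-mechanism-for-divisible-items} we have $y_i \geq 1/e$, hence $y_i^c \geq e^{-c}$, and under the random-offset scheme the expected value of agent $i$ equals $y_i^c \cdot v_i(A_i) \geq e^{-c} \cdot v_i(A_i)$, yielding the $e^{-c}$-MNW guarantee. The incentive ratio analysis is the main step. Fix agent $1$ and a misreport $f_1'$, and let $(A, y_1)$ and $(A', y_1')$ denote the MNW allocations and PA parameters under the truthful and the misreported profile, respectively. As in the proof of \Cref{thm:randomized-PA-mechanism}, the truthful expected utility is $y_1^c \cdot v_1(A_1)$ while the misreport expected utility is $(y_1')^c \cdot v_1(A_1')$. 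Two bounds on $v_1(A_1')/v_1(A_1)$ now hold simultaneously: \Cref{thm:incentive-ratio-of-MNW} gives $v_1(A_1') \leq 2 v_1(A_1)$, and \Cref{lem:property-for-PA-mechanism} gives $y_1' v_1(A_1') \leq y_1 v_1(A_1)$. Writing $r = y_1 / y_1'$, the ratio of expected utilities is at most $r^{-c} \cdot \min(2, r)$, and a short case split on whether $r \leq 2$ yields $r^{1-c} \leq 2^{1-c}$ in one case and $2 r^{-c} \leq 2^{1-c}$ in the other, finishing the incentive-ratio bound.

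The main conceptual step is recognizing that the two endpoints rest on structurally different inequalities and that both remain valid in the interpolated mechanism, so they can be combined via the $\min(2, r)$ envelope; the MNW bound dominates when $r$ is small (the PA parameter moves little under misreporting), while the PA bound dominates when $r$ is large. The ``deterministic if and only if $c = 0$'' clause is by construction: for $c = 0$ no randomization enters, whereas for $c \in (0, 1]$ the random-offset scheme is used precisely to guarantee that each point of $A_1'$ is allocated to agent $1$ with probability exactly $(y_1')^c$. As the lower-bound instance in \Cref{thm:ic-pa-cake} illustrates, a deterministic discard of the same fraction would allow a misreporting agent to concentrate mass into the undiscarded region and inflate the incentive ratio above $2^{1-c}$.
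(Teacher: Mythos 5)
Your proposal is correct and matches the paper's proof in all essentials: the same mechanism (random-offset allocation of the fraction $y_i^c$), the same Nash welfare argument, and the same two key inequalities ($v_1(A_1') \leq 2\,v_1(A_1)$ from \Cref{thm:incentive-ratio-of-MNW} and $y_1' v_1(A_1') \leq y_1 v_1(A_1)$ from \Cref{lem:property-for-PA-mechanism}). The only cosmetic difference is that you combine the two bounds via a case split on $r = y_1/y_1'$ versus $2$, whereas the paper combines them as a weighted geometric mean with exponents $1-c$ and $c$; both yield $2^{1-c}$.
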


Note that setting $c = 0$ and $c = 1$ respectively recovers the guarantee of the MNW mechanism and that of the randomized PA mechanism.

\begin{proof}
    Given $c \in [0, 1]$, the mechanism is identical to the randomized PA mechanism presented in \Cref{sec:randomized-PA-mechanism} except the fractions of discarded resources in Step~\ref{item:step-3-of-PA-mechanism} are defined as
    \begin{align*}
        y_i = \left( \frac{\prod_{j \neq i} v_j(A_j)}{\prod_{j \neq i} v_j(A_j^i)} \right)^c
    \end{align*}
    for all $i \in N$.
    Note that if $c = 0$, then the mechanism is identical to the MNW mechanism, which is deterministic, and otherwise it is randomized.
    The Nash welfare guarantee simply follows by noticing that $y_i \geq e^{-c}$ implied by Theorem~\ref{thm:PA-mechanism-for-divisible-items}.
    
    It remains to prove the incentive ratio.
    Due to symmetry, it is sufficient to prove the incentive ratio for agent $1$.
    Suppose that the density function of agent $i$ is $f_i$ and agent $1$ misreports his density function as $f_1'$.
    Define $A$ and $y_1$ as the MNW allocation and parameter computed by our mechanism for profile $(f_1, \ldots, f_n)$, and define $A'$ and $y_1'$ analogously for profile $(f_1', f_2, \ldots, f_n)$.
    By the same arguments as in the proof of \Cref{thm:randomized-PA-mechanism}, the utility of agent $1$ when reporting truthfully is $y_1 \cdot v_1(A_1)$ and his expected utility when manipulating is $y_1' \cdot v_1(A_1')$.
    Thus, it suffices to show that
    \begin{align*}
        2^{1 - c} \cdot y_1 \cdot v_1(A_1) \geq y_1' \cdot v_1(A_1').
    \end{align*}
    Since the denominators of $y_1$ and $y_1'$ are the same, it is equivalent to show that
    \begin{align*}
        2^{1 - c} \cdot v_1(A_1) \cdot \left( \prod_{j = 2}^n v_j(A_j) \right)^c  \geq v_1(A_1') \cdot \left( \prod_{j = 2}^n v_j(A_j') \right)^c.
    \end{align*}
    Note that $A$ is the allocation that maximizes $\prod_{i \in N} v_i(A_i)$, which implies that $\prod_{i \in N} v_i(A_i) \geq \prod_{i \in N} v_i(A_i')$.
    As a result, it is sufficient to prove that
    \begin{align*}
        2^{1 - c} \cdot [v_1(A_1)]^{1 - c} \geq [v_1(A_1')]^{1 - c},
    \end{align*}
    which holds since by \Cref{thm:incentive-ratio-of-MNW}, the incentive ratio of the MNW mechanism is at most $2$, i.e., $\frac{v_1(A_1')}{v_1(A_1)} \leq 2$.
\end{proof}

\subsection{Homogeneous Divisible Items}

In this subsection, we consider the interpolation for homogeneous divisible items.
We start by noticing that the upper bound for the incentive ratio of the MNW mechanism in cake cutting carries over to homogeneous divisible items since agents have fewer rooms to misreport with homogeneous divisible items.
Moreover, the instance given in the proof of \Cref{lem:lower-bound-for-MNW-mechanism} can be seen as an instance with three homogeneous divisible items, which implies that the lower bound also holds.

\begin{corollary}
    For homogeneous divisible items, the incentive ratio of the MNW mechanism without the free disposal assumption is $2$, and the lower bound holds even with the free disposal assumption.
\end{corollary}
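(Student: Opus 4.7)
The plan is to derive both bounds directly from the cake cutting results already established, since the homogeneous divisible items setting sits as a restriction of cake cutting once we observe that the breakpoints are pre-specified.

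For the upper bound, I would argue that any instance of the homogeneous divisible items setting with items $X_1,\ldots,X_m$ can be embedded into a cake cutting instance on $[0,1]$, where the intervals $X_1,\ldots,X_m$ form a fixed partition and each reported valuation vector on the items corresponds to a piecewise constant density function with breakpoints exactly at the endpoints of $X_1,\ldots,X_m$. Any misreport $f_i'$ in the homogeneous setting is in particular a valid piecewise constant density function in the cake cutting setting (with the same fixed breakpoints), so the MNW allocation produced is identical under both interpretations. Hence the supremum defining the incentive ratio is taken over a smaller set of deviations, and the upper bound of $2$ from \Cref{thm:incentive-ratio-of-MNW} transfers immediately. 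Note this embedding is valid even without the free disposal assumption, since the MNW allocation in both formulations will fully allocate the items / cake.

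For the lower bound, I would revisit the hard instance constructed in the proof of \Cref{lem:lower-bound-for-MNW-mechanism}. The truthful density $f_1$ and the misreported density $f_1'$ are both piecewise constant, and crucially they share the same breakpoints (at $1$ and $2$) as the densities $f_i$ of the other agents. Thus the entire instance — including the deviation — can be recast as a homogeneous divisible items instance with three items $[0,1]$, $(1,2]$, and $(2,3]$, where agent $1$'s deviation is merely a change of reported values on these fixed items rather than a manipulation of breakpoints. The same MNW allocations and the same ratio computation from the proof therefore apply verbatim, yielding a lower bound of $2 - 1/n \to 2$ as $n \to \infty$. Since the original argument did not rely on the absence of free disposal (the hard allocations fully partition the cake anyway), the lower bound holds with or without free disposal.

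The only step that requires any care is verifying that agent $1$'s deviation in the hard instance does not exploit the extra freedom available in cake cutting — that is, does not shift breakpoints. This is immediate by inspection of $f_1'$ in the proof of \Cref{lem:lower-bound-for-MNW-mechanism}, so no new calculation is needed and the corollary follows as a direct consequence of \Cref{thm:incentive-ratio-of-MNW}.
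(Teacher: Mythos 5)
Your proposal is correct and matches the paper's own argument: the upper bound transfers because homogeneous divisible items only restrict the space of possible misreports relative to cake cutting, and the lower bound follows by observing that the hard instance of \Cref{lem:lower-bound-for-MNW-mechanism} (including the deviation $f_1'$) uses fixed breakpoints and is thus a three-item homogeneous instance. No gaps.
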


Combined with \Cref{thm:PA-mechanism-for-divisible-items}, we have the following alternatives for a trade-off between incentive ratio and Nash welfare:
\begin{enumerate}
    \item the MNW mechanism has an incentive ratio of $2$ and is $1$-MNW, and
    \item the PA mechanism has an incentive ratio of $1$ and is $1 / e$-MNW.
\end{enumerate}
Note that compared to the cake cutting setting, the PA mechanism here is deterministic instead of randomized.
By applying the same technique as in the proof of \Cref{thm:interpolation-for-cake}, we establish a counterpart of \Cref{thm:interpolation-for-cake} for homogeneous divisible items.

\begin{theorem}
    For every $c \in [0, 1]$, there exists a deterministic mechanism for homogeneous divisible items with incentive ratio $2^{1 - c}$ and $e^{-c}$-MNW guarantee.
\end{theorem}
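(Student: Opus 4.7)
The plan is to follow the proof of \Cref{thm:interpolation-for-cake} essentially verbatim, using the fact that, for homogeneous divisible items, the PA mechanism and its interpolated variant can be implemented deterministically without losing the incentive guarantee that was only achieved in expectation in the cake setting. Concretely, given $c \in [0,1]$, I would define the mechanism to first compute the MNW allocation $A$ and the ``leave-one-out'' MNW allocations $A^i$ based on reported valuations, and then allocate to each agent $i$ the $y_i$-fraction of each item in $A_i$, where
\begin{align*}
    y_i = \left( \frac{\prod_{j \neq i} v_j(A_j)}{\prod_{j \neq i} v_j(A_j^i)} \right)^c.
\end{align*}
Because each item is homogeneous, the choice of which $y_i$-fraction of $A_i \cap X_t$ is handed to agent $i$ is irrelevant to his utility, so the mechanism is deterministic. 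The $e^{-c}$-MNW guarantee is immediate from $y_i \geq (1/e)^c = e^{-c}$, which follows from \Cref{thm:PA-mechanism-for-divisible-items}.

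For the incentive ratio, I would again fix agent $1$ and suppose he misreports $f_1$ as $f_1'$, letting $A, y_1$ and $A', y_1'$ denote the quantities computed by the mechanism under truthful reporting and under the deviation, respectively. Because items are homogeneous (not a cake), the value agent $1$ actually collects when misreporting is \emph{deterministically} $y_1' \cdot v_1(A_1')$ — there is no need to pass through an expectation, which is precisely what made randomization unnecessary. The truthful utility is $y_1 \cdot v_1(A_1)$. Thus the target inequality is
\begin{align*}
    2^{1-c}\, y_1\, v_1(A_1) \;\geq\; y_1'\, v_1(A_1').
\end{align*}

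Since $A^1$ does not depend on agent $1$'s report, the denominators of $y_1$ and $y_1'$ cancel, reducing the inequality to
\begin{align*}
    2^{1-c}\, v_1(A_1) \left( \prod_{j=2}^n v_j(A_j) \right)^c \;\geq\; v_1(A_1') \left( \prod_{j=2}^n v_j(A_j') \right)^c.
\end{align*}
By optimality of the MNW allocation $A$, we have $\prod_{i \in N} v_i(A_i) \geq \prod_{i \in N} v_i(A_i')$, which after raising to the $c$-th power and rearranging bounds the product-of-others term and reduces the task to showing $2^{1-c} \cdot [v_1(A_1)]^{1-c} \geq [v_1(A_1')]^{1-c}$. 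This is equivalent to $v_1(A_1')/v_1(A_1) \leq 2$, which is exactly the incentive ratio bound of $2$ for the MNW mechanism on homogeneous divisible items recorded in the corollary immediately above the theorem statement.

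Since every step is a transcription of the argument used in \Cref{thm:interpolation-for-cake} — with the single observation that homogeneity removes the need for randomization in the discarding step — there is essentially no main obstacle; the only point worth flagging is ensuring that the deterministic allocation of a $y_1'$-fraction of each homogeneous item to the misreporting agent really does yield utility $y_1' \cdot v_1(A_1')$ (rather than only in expectation), which is immediate because $f_1$ is constant on each item and hence insensitive to which sub-fraction is taken.
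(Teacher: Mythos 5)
Your proposal is correct and matches the paper's approach: the paper itself proves this theorem simply by invoking ``the same technique as in the proof of \Cref{thm:interpolation-for-cake},'' which is precisely the argument you transcribe, with the same key observation that homogeneity of each item makes the discarded fraction's location irrelevant and hence the mechanism deterministic.
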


To give a lower bound for the interpolation, we apply the instances given by \cite{DBLP:conf/sigecom/ColeGG13}, which are used to show that no truthful mechanism for homogeneous divisible items can guarantee an approximation ratio for MNW greater than $1/2$.
Note that the following lower bound holds for homogeneous divisible items, in which case agents are more restrictive, and thus it also holds for cake cutting.
The proof of \Cref{thm:interp-lb} can be found in \ref{sec:proof-of-interp-lb}.

\begin{theorem}\label{thm:interp-lb}
    There does not exist a deterministic mechanism for homogeneous divisible items with an incentive ratio at most $2^{1 - c}$ that can guarantee an approximation ratio for MNW strictly greater than $2^{-c}$.
\end{theorem}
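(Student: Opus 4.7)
The plan is to adapt the two-agent lower-bound instances of Cole, Gkatzelis, and Goel referenced just before the theorem, rerunning their argument with the IR-relaxed truthfulness constraint in place of strict truthfulness. Assume for contradiction that some deterministic $M$ has incentive ratio at most $r = 2^{1-c}$ and MNW approximation ratio $\alpha > 2^{-c}$. The goal is to derive the inequality $r \geq 2\alpha$, which contradicts the hypotheses because $2\alpha > 2 \cdot 2^{-c} = 2^{1-c} \geq r$.

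More concretely, I aim to exhibit two profiles $I_1$ and $I_2$ differing only in agent $1$'s report, with two complementary properties: (a) in $I_1$, the $\alpha$-MNW guarantee applied to agent $2$ pins down enough of the items for agent $2$ that agent $1$'s true utility $v_1(M_1(I_1))$ is upper bounded by some quantity $U$; and (b) in $I_2$, where agent $1$ misreports to some $f_1'$ chosen to inflate his MNW share, the $\alpha$-MNW guarantee applied to agent $1$ forces $M_1(I_2)$ to have true value under $f_1$ at least $2\alpha U$. Combined with the incentive-ratio inequality $v_1(M_1(I_2)) \leq r \cdot v_1(M_1(I_1)) \leq r \cdot U$, this yields $2\alpha U \leq r U$, i.e., $r \geq 2\alpha$. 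The factor of two here mirrors the tight factor appearing in \Cref{thm:incentive-ratio-of-MNW} and in the Cole et al.\ $1/2$ bound that is recovered at $c=1$.

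The main technical obstacle lies in establishing property (b): the $\alpha$-MNW guarantee delivers a lower bound on the \emph{reported} value $v_1'(M_1(I_2))$, whereas what is needed is a lower bound on the \emph{true} value $v_1(M_1(I_2))$ under the unmisreported $f_1$. Following Cole et al., the misreport $f_1'$ will be designed so that, on the portion of the items that agent $1$ is guaranteed to receive in $I_2$ after satisfying agent $2$'s own $\alpha$-MNW constraint, the true density $f_1$ pointwise dominates the reported density $f_1'$ by a suitable factor; this is exactly the transfer that converts the reported-value lower bound into the required true-value lower bound of $2 \alpha U$. The concrete two-item construction and the calculations that realize this transfer are given in \ref{sec:proof-of-interp-lb}.
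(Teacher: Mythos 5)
Your overall logic (derive $r \ge 2\alpha$ and contradict $r \le 2^{1-c}$, $\alpha > 2^{-c}$) is sound, but the two-agent, two-profile construction you plan to build it on cannot deliver the factor $2$, and this is a genuine gap rather than a deferred calculation. The paper's proof uses $n+1$ instances on $n$ agents with $(k+1)n$ items, and for each fixed $n$ it only rules out approximation ratios above $2^{-c}\cdot\frac{n+1}{n}+\epsilon$; the stated bound $2^{-c}$ is reached only in the limit $n\to\infty$. Specializing that construction to $n=2$ gives only $\alpha \le \frac{3}{2}\cdot 2^{-c}$, and the source of the slack is structural: when agent $1$ reports truthfully in $I_1$, the mechanism's own $\alpha$-MNW guarantee together with proportionality of the two-agent MNW allocation forces $v_1(M_1(I_1)) \ge \frac{\alpha}{2}v_1([0,1])$, so the room between your upper bound $U$ and the ceiling $v_1([0,1])$ on the post-misreport utility is too small to realize a multiplicative gain of $2\alpha$ against a $U$ that agent $2$'s guarantee alone can certify. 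You would need to exhibit concrete profiles where both (a) and (b) hold with the exact factor $2$, and no two-agent instance of the Cole et al.\ type does so.

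Two further points where your plan diverges from what actually works. First, the paper never upper-bounds the truthful utility of a single agent via the other agents' guarantees (your step (a)); instead it observes that \emph{every} agent $i$ in instance $n+1$ can perform the same deviation to instance $i$, so the incentive-ratio bound forces all $n$ truthful utilities simultaneously above the common MNW value $kn+k+1$, which is infeasible by Pareto optimality of the MNW allocation. This global feasibility argument is what removes the need to control the mechanism's behavior on the truthful profile, and it is also exactly where the factor $\frac{n+1}{n}$ (hence the need for large $n$) enters. Second, the reported-to-true value transfer in step (b) is not achieved by pointwise domination of densities: the misreport is to the uniform valuation, under which the true valuation is larger by a factor $kn+1$ on one item and by a factor $1$ everywhere else, so pointwise domination is far too lossy. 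The paper instead uses an averaging/pigeonhole step — the $\alpha$-MNW guarantee under the uniform report forces at least a $\rho+\frac{\epsilon}{2}$ fraction of some \emph{specific} item (taken to be the agent's truly favorite item) to be allocated to him — and that localized guarantee is what converts into a large true value. Without these two ingredients, the inequality $r \ge 2\alpha$ does not follow from your outline.
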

\section{Envy-free Mechanisms}

In this section, we study the optimal incentive ratio achievable by envy-free cake cutting mechanisms.
Firstly, we give an envy-free mechanism for two agents with an incentive ratio of $4 / 3$, surpassing the incentive ratio upper bound provided by the MNW mechanism, which is at least $3 / 2$ for two agents by \Cref{lem:lower-bound-for-MNW-mechanism}.
Then, we show that the incentive ratio of any envy-free cake cutting mechanism is $\Theta(n)$ when the bundle allocated to each agent must be a connected piece.

\subsection{Two Agents}
\label{sec:2-agents}

We first give an envy-free cake cutting mechanism for two agents with an incentive ratio of $4/3$, which works as follows.
\begin{enumerate}
    \item \label{item:step1-2-agent} Let $x, y \in [0, 1]$ be two points of the cake satisfying $v_1([0, x]) = v_1([x, 1])$ and $v_2([0, y]) = v_2([y, 1])$, and assume without loss of generality that $x \leq y$.
    \item \label{item:step2-2-agent} Let $p, q \in [x, y]$ be two points satisfying $v_1([x, p]) = v_1[p, y]$ and $v_2([x, q]) = v_2([q, y])$.
    \item If $p \leq q$, then let $A_1 = [0, p]$ and $A_2 = [p, 1]$; otherwise, let $A_1 = [0, x] \cup [p, y]$ and $A_2 = [x, p] \cup [y, 1]$.
\end{enumerate}
To see that this mechanism is envy-free, notice that $[0, x] \in A_1$, and hence $v_1(A_1) \geq v_1([0, x]) = v_1([0, 1]) / 2$.
Similarly, since $[y, 1] \in A_2$, we have $v_2(A_2) \geq v_2([y, 1]) = v_2([0, 1]) / 2$.

\begin{theorem}
    The incentive ratio of the above mechanism is $4/3$.
\end{theorem}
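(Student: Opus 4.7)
The plan is to prove the upper bound of $4/3$ via a case analysis on agent 1's reported valuation, and then exhibit a simple instance achieving this ratio. Throughout, I would normalize so that $v_1([0,1]) = 1$ and appeal to the symmetry of the mechanism in the two agents to assume that agent 1's true midpoint satisfies $x \leq y$. Set $\alpha = v_1([x, y])$; since $v_1([x, 1]) = 1/2$ and $[x, y] \subseteq [x, 1]$, we have $\alpha \in [0, 1/2]$. An easy calculation using the defining equation $v_1([x, p]) = v_1([p, y])$ shows that in both subcases of step~3, agent 1's truthful utility equals $v_1([0,x]) + v_1([x,p]) = 1/2 + \alpha/2 = (1+\alpha)/2$, independent of whether $p \leq q$ or $p > q$.

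For the upper bound, consider any manipulation by agent 1 with reported density $f_1'$, inducing a reported midpoint $x'$. If $x' \leq y$, then the step~2 constraint $p' \in [x', y]$ forces agent 1's bundle, which is either $[0, p']$ or $[0, x'] \cup [p', y]$, to lie entirely within $[0, y]$, so the utility is at most $v_1([0, y]) = 1/2 + \alpha$. If instead $x' > y$, then after swapping the labels of the agents in the mechanism's description, agent 1 plays the role of the ``second'' agent and receives a bundle contained in $[y, 1]$, yielding utility at most $v_1([y, 1]) = 1/2 - \alpha \leq (1+\alpha)/2$; no manipulation of this form is profitable. The worst-case ratio is therefore
\begin{align*}
    \frac{1/2 + \alpha}{(1+\alpha)/2} = \frac{1 + 2\alpha}{1 + \alpha},
\end{align*}
which is increasing in $\alpha$ on $[0, 1/2]$ and attains its maximum $4/3$ at $\alpha = 1/2$.

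For the matching lower bound, I would exhibit the instance with $f_1(t) = 2$ on $[0, 1/2]$ and $0$ elsewhere, and $f_2(t) = 2$ on $[1/2, 1]$ and $0$ elsewhere. A direct calculation gives $x = 1/4$, $y = 3/4$, $p = 3/8$, $q = 5/8$, so agent 1 truthfully receives $[0, 3/8]$ with value $3/4$. If agent 1 instead reports the uniform density $f_1' \equiv 1$, then $x' = 1/2$ and $p' = 5/8 = q$, so the first subcase of step~3 applies and $A_1 = [0, 5/8]$, with value $v_1(A_1) = 1$. The resulting ratio $1/(3/4) = 4/3$ matches the upper bound, and one sees that $\alpha = 1/2$ here, which is precisely the tight value in the analysis above.

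The main obstacle I anticipate is a careful treatment of the $x' > y$ case, which requires rewriting the mechanism after swapping the agent labels and verifying that agent 1 indeed ends up with a bundle entirely in $[y, 1]$; the step~2 constraint does most of the work, but one must also observe that both possible bundles $[q', 1]$ and $[y, q'] \cup [x', 1]$ sit in $[y, 1]$. Once this is established, the remaining argument reduces to monotonicity of the one-variable function $(1 + 2\alpha)/(1+\alpha)$ on $[0, 1/2]$, and the tight lower-bound instance is naturally suggested by pushing $\alpha$ to its extreme value $1/2$, i.e., by concentrating agent 1's valuation entirely inside $[0, y]$.
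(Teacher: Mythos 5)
Your proof is correct and follows essentially the same route as the paper's: the truthful utility is computed as $\frac{1}{2}\bigl(v_1([0,1]) + v_1([x,y])\bigr)$, the manipulated utility is bounded by $v_1([0,y])$ when the reported midpoint stays left of $y$ and by $v_1([y,1])$ otherwise, and the ratio $(1+2\alpha)/(1+\alpha)$ is maximized at $\alpha = 1/2$. Your tight instance differs from the paper's (which takes $f_2$ uniform on all of $[0,1]$ rather than supported on $[1/2,1]$), but both realize the extremal case $\alpha = 1/2$ and yield the same ratio $4/3$.
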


\begin{proof}
    We prove the lower and upper bounds separately.

    \paragraph{Lower bound}
    Let the density functions of agents be
    \begin{align*}
        f_1(x) = 
        \begin{cases}
            1, & x \in [0, 1/2],\\
            0, & \text{otherwise}
        \end{cases}
    \end{align*}
    and $f_2(x) = 1$ for $x \in [0, 1]$.
    When both agents report truthfully, the parameters computed in Step~\ref{item:step1-2-agent} are $x = 1 / 4$ and $y = 1 / 2$, and the parameters computed in Step~\ref{item:step2-2-agent} satisfy $p = q = 3 / 8$.
    Since $p \leq q$, the resulting allocation is $([0, 3 / 8], [3 / 8, 1])$, in which the utility of agent $1$ is $3 / 8$.

    Assume that agent $1$ misreports his density function to be $f_1'(x) = 1$ for $x \in [0, 1]$.
    In this case, the parameters computed in Step~\ref{item:step1-2-agent} satisfy $x = y = 1 / 2$, and the parameters computed in Step~\ref{item:step2-2-agent} are $p = q = 1 / 2$.
    Since $p \leq q$, the resulting allocation is $([0, 1 / 2], [1 / 2, 1])$, in which the utility of agent $1$ with respect to $f_1$ is $1 / 2$.
    Therefore, the incentive ratio of the mechanism is at least $(1 / 2) / (3 / 8) = 4 / 3$.

    \paragraph{Upper bound}
    We only prove the incentive ratio for agent $1$, and the incentive ratio for agent $2$ can be proved similarly.
    Let $x^T, y^T, p^T, q^T$ be the computed parameters when both agents report truthfully, and let $x^M, y^M, p^M, q^M$ be the computed parameters after agent $1$ misreports his density function.
    Note that $y^T = y^M$.
    Recall we assume without loss of generality that $x^T \leq y^T$.
    Let $A$ and $A'$ be the resulting allocations under the true and manipulated profiles, respectively.
    Since we always have $[0, x^T] \subseteq A_1$ and $v_1([x^T, p^T]) = v_1([p^T, y^T]) = v_1([x^T, y^T]) / 2$, we have
    \begin{align*}
        v_1(A_1) = v_1([0, x^T]) + \frac{v_1([x^T, y^T])}{2} = \frac{v_1([0, 1]) + v_1([x^T, y^T])}{2}.
    \end{align*}
    When $x^M > y^M$, since $[0, y^M] \subseteq A_2'$, it holds that
    \begin{align*}
        v_1(A_1')
        \leq v_1([y^M, 1]) = v_1([y^T,1])
        \leq v_1([x^T, 1])
        = \frac{v_1([0, 1])}{2}
        \leq v_1(A_1),
    \end{align*}
    in which case manipulation is not beneficial. On the other hand, when $x^M \leq y^M$, since $[y^M, 1] \subseteq A_2'$,
    \begin{align*}
        v_1(A_1')
        \leq v_1([0, y^M])
        = v_1([0, x^T]) + v_1([x^T, y^T])
        = \frac{v_1([0, 1])}{2} + v_1([x^T, y^T]).
    \end{align*}
    Therefore, the incentive ratio of the mechanism is at most
    \begin{align*}
        \frac{v_1(A_1')}{v_1(A_1)}
        = \frac{\frac{v_1([0, 1])}{2} + v_1([x^T, y^T])}{\frac{v_1([0, 1]) + v_1([x^T, y^T])}{2}}
        \leq \frac{4}{3},
    \end{align*}
    where the inequality is because $v_1([x^T, y^T]) \leq v_1([x^T, 1]) = v_1([0, 1]) / 2$.
\end{proof}

\subsection{Connected Pieces Setting}

In this subsection, we consider a special setting where the bundle allocated to each agent must be a connected piece. That is, only $n - 1$ cuts are allowed to split the cake for $n$ agents.
In this setting, we show that the incentive ratio of any envy-free cake cutting mechanism is $\Theta(n)$.

We remark here that the free disposal assumption is no longer reasonable under this setting as the connected pieces constraint would be violated.
However, the hard instance constructed in the proof of \Cref{theorem-n-lower-bound} does not contain an interval that no agents value, and thus \Cref{theorem-n-lower-bound} holds even with the free disposal assumption.

\begin{theorem}\label{theorem-n-lower-bound}
    The incentive ratio of any envy-free cake cutting mechanism with the connected pieces constraint is $\Theta(n)$.
\end{theorem}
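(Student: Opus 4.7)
The plan is to establish matching $\Theta(n)$ bounds by proving the upper and lower bounds separately.

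For the $O(n)$ upper bound, I would invoke the classical fact that every envy-free allocation on the whole cake is proportional. Since the connected-pieces constraint rules out free disposal, the bundles $A_1, \ldots, A_n$ partition $[0,1]$, so $\sum_j v_i(A_j) = v_i([0,1])$. Summing the envy-freeness inequalities $v_i(A_i) \geq v_i(A_j)$ over all $j \in N$ yields $v_i(A_i) \geq v_i([0,1])/n$. Since the utility agent $i$ can obtain under any misreport is bounded above by $v_i([0,1])$, the incentive ratio of every envy-free mechanism is at most $n$.

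For the $\Omega(n)$ lower bound, my plan is to construct, for every envy-free mechanism, a hard instance together with a beneficial misreport exhibiting ratio $\Omega(n)$. The design target is a truthful instance in which envy-freeness combined with the connected-pieces constraint pins down agent $1$'s piece to have value essentially $v_1([0,1])/n$, saturating the proportional floor, while a carefully chosen misreport shifts the envy-free allocation space so that every admissible envy-free allocation on the new profile gives agent $1$ a long connected piece whose true value is $\Omega(v_1([0,1]))$. Concretely, my candidate is to let agents $2, \ldots, n$ have narrow, pairwise-disjoint spikes at evenly spaced positions, and let agent $1$ have uniform density. Envy-freeness then essentially forces each spike agent to receive a small piece containing their own spike, leaving agent $1$ with a connected ``gap'' of length $\Theta(1/n)$. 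For the misreport, agent $1$ would announce a density (e.g., a concentrated spike at a strategically chosen position, or a cleverly chosen uniform block) that rigidifies the admissible cut structure, thereby forcing every envy-free allocation on the misreported profile to give him a connected piece of length $\Omega(1)$ on which his true uniform density integrates to $\Omega(1)$.

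The main obstacle is ensuring the construction is robust against every envy-free mechanism, not just a particular adversarial tie-break. This requires calibrating the densities so that simultaneously (i) every envy-free allocation of the truthful instance, and not just one choice, gives agent $1$ value at most $O(1/n)$; and (ii) every envy-free allocation of the misreported instance gives agent $1$ true value $\Omega(1)$. The combinatorial rigidity imposed by the connected-pieces constraint---cuts must lie in positions that simultaneously satisfy envy inequalities coming from all $n$ agents, which drastically narrows the admissible allocations---is both the source of the difficulty and the main tool for threading this needle. I expect the detailed calibration of spike widths and positions, and the precise form of the misreport, to be the technical crux of the argument.
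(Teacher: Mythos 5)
Your upper bound argument is correct and is exactly the paper's: envy-freeness plus complete allocation gives proportionality, so the truthful utility is at least $v_i([0,1])/n$ while any manipulated utility is at most $v_i([0,1])$.

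The lower bound, however, has a genuine gap: you have identified the right difficulty but not resolved it, and the specific candidates you float do not work. In your spikes-plus-uniform instance, part (i) is fine (agent $1$'s connected piece cannot fully contain any spike, so it is trapped in a gap of length $O(1/n)$), but there is no misreport of the kind you describe that achieves part (ii). If agent $1$ reports a concentrated spike, an adversarial envy-free mechanism can hand him a tiny interval around that spike, which has reported value close to his full reported value (so nobody's envy constraints are violated) but true uniform value $o(1)$. If he reports a uniform block $[a,b]$, the mechanism need only give him a sub-piece containing at least a $1/n$ fraction of $[a,b]$, again true value $O(1/n)$. The quantifier is the problem: you must exhibit a reported density for which \emph{every} envy-free connected allocation of the manipulated profile assigns the manipulator a piece of large \emph{true} value, and nothing in your sketch forces this. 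The paper's construction is correspondingly much heavier: it uses $n=8k-3$ agents arranged in five groups so that a structural lemma guarantees that in every envy-free allocation of the true profile at least one of two designated agents $p_1,p_2$ receives value $O(1)$, and the winning misreport for that agent keeps his $k$ true segments while adding $8k-4$ ``decoy'' segments of height $k-\varepsilon$, positioned so that (a) no single decoy segment and no union of adjacent decoy segments can satisfy even proportionality for him, and (b) other agents' densities (including zero-width ``boundary'' agents) block any piece mixing true and decoy segments; this forces every envy-free allocation of the manipulated profile to award him essentially all $k$ of his true segments. Designing decoys that are simultaneously worthless to take and impossible to avoid is the missing idea, and it is precisely the calibration you defer to; as it stands the $\Omega(n)$ direction is unproven.
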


The upper bound in \Cref{theorem-n-lower-bound} is trivial due to the observation that the incentive ratio of any envy-free mechanism, even without the connected pieces requirement, is at most $n$ as each agent can always receive at least $1 / n$ of his value for the entire cake under honest behaviors, and the proof of the lower bound can be found in \ref{sec:proof-of-theorem-n-lower-bound}.
\section{Discussion and Future Directions}

In this paper, we show that the incentive ratio of $2$ for the MNW mechanism, which has been established for homogeneous divisible items, also holds in cake cutting.
This result not only further substantiates the prominence of the MNW mechanism and enhances our understanding of the differences between the homogeneous divisible items setting and cake cutting, but more importantly, reveals the possibility to combine fairness and efficiency with a low incentive ratio in cake cutting.
Note that an incentive ratio larger than $1$ can also be seen as a relaxation of truthfulness.
Thus, the first interesting direction is to explore the fairness and efficiency guarantees achievable when a low incentive ratio is required.

We also study the performance of the PA mechanism in cake cutting and show that its incentive ratio is between $[e^{1/e}, e]$.
An open problem is to further tighten the bounds, and if the incentive ratio of the PA mechanism is less than $2$, then as we did in \Cref{sec:interpolation}, we can interpolate between the guarantees provided by the MNW mechanism and the PA mechanism in cake cutting, which is realized by deterministic mechanisms as opposed to randomized ones.
Notably, whether we can remove the randomization condition and obtain a truthful deterministic mechanism with a non-trivial Nash welfare guarantee in cake cutting still remains unknown.
In fact, as an open question raised by \cite{DBLP:journals/ai/BuST23}, we do not even know whether there is a truthful cake cutting mechanism that always allocates each agent a subset on which the agent has a strictly positive value.

Moreover, recall that our interpolation for cake cutting is randomized.
However, to the best of our knowledge, there is no corresponding lower bound when randomization is allowed.
Thus, it is interesting to study the best trade-off between incentives and efficiency achieved by randomized mechanisms.

In addition, although the assumption of piecewise constant valuations is common in the cake cutting literature \cite{DBLP:conf/wine/AzizY14, DBLP:journals/ai/BuST23}, it would be interesting to generalize our results beyond piecewise constant valuations, for which there are two technical barriers.
Firstly, \Cref{thm:nash-welfare-condition}, which we heavily rely on in the proof of \Cref{thm:incentive-ratio-of-MNW}, is established under the assumption of piecewise constant valuations.
Secondly, the PA mechanism is defined for homogeneous divisible items, which is mathematically equivalent to cake cutting with piecewise constant valuations when the partition of the cake is fixed.
Therefore, to generalize the results in this paper beyond piecewise constant valuations, the first step is to generalize \Cref{thm:nash-welfare-condition} and the PA mechanism, which seems to be non-trivial.

Finally, we give an envy-free cake cutting mechanism for two agents with an incentive ratio of $4 / 3$.
It is intriguing to explore whether this bound is optimal and how to generalize the idea to more than two agents.
On the other hand, for the connected pieces setting, we show that envy-freeness and a low incentive ratio are incompatible.
It would be appealing to study whether we can combine a low incentive ratio with other weaker fairness criteria, such as proportionality, under the connected pieces constraint.

\section*{Acknowledgments}

The research of Biaoshuai Tao was supported by the National Natural Science Foundation of China (Grant No. 62472271 
and 62102252).
The research of Xiaohui Bei was supported by the Ministry of Education, Singapore, under its Academic Research Fund Tier 1 (RG98/23).

The authors are grateful to Yixin Tao, Nick Gravin, and the anonymous reviewers for their valuable comments on this paper.

\bibliographystyle{alpha}
\bibliography{references}

\clearpage
\appendix
\section{Proof of \Cref{lem:value-in-weakly-MNW-allocations}}
\label{sec:proof-of-lem-value-in-weakly-MNW-allocations}

For simplicity, we make the free disposal assumption in the proof, and it is easy to see that the lemma still holds after removing the free disposal assumption since adding dummy subsets does not affect agents' received values.
Before proving \Cref{lem:value-in-weakly-MNW-allocations}, we present some important properties for weakly MNW allocations.
We start by showing that weakly MNW allocations satisfy a slightly weaker resource monotonicity guarantee.

\begin{lemma}\label{lem:resource-monotonicity-for-weakly-MNW}
    Let $I \subseteq X_t$ be a subset of an interval $X_t$.
    Suppose that $A$ is a weakly MNW allocation on the cake $[0, 1] - I$.
    Then there exists a weakly MNW allocation $A^+ = (A_1^+, \ldots, A_n^+)$ on $[0, 1]$ such that $v_i(A_i) \leq v_i(A_i^+)$ for all $i \in N$.
\end{lemma}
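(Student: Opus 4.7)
The plan is to leave agent $1$'s bundle untouched and re-optimize only among agents $2,\ldots,n$ on the enlarged complement. Concretely, I will set $A_1^+ := A_1$ and let $(A_2^+,\ldots,A_n^+)$ be an MNW allocation for the $(n-1)$-agent instance restricted to agents $\{2,\ldots,n\}$ on the sub-cake $C := [0,1]\setminus A_1 = I \cup (A_2 \cup \cdots \cup A_n)$. The guiding intuition is that a weakly MNW allocation already behaves like MNW on the restriction to agents $2$ through $n$, as the paper observes immediately after \Cref{def:week-MNW-allocation}, so the most natural extension is to freeze agent $1$'s share and grow what the other agents divide.

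For the monotonicity of values, since $(A_2,\ldots,A_n)$ is MNW on $A_2\cup\cdots\cup A_n$ for agents $\{2,\ldots,n\}$, applying \Cref{thm:resource-monotonicity} to this $(n-1)$-agent subproblem on the larger cake $C$ gives $v_j(A_j^+) \ge v_j(A_j)$ for every $j \ge 2$; the bound $v_1(A_1^+) \ge v_1(A_1)$ holds trivially because $A_1^+ = A_1$. To verify that $A^+$ is weakly MNW on $[0,1]$, the ratio inequality among $i,j \in \{2,\ldots,n\}$ on each interval $X_s$ follows directly from \Cref{thm:nash-welfare-condition} applied to the MNW allocation $(A_2^+,\ldots,A_n^+)$ on $C$. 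The only remaining case is when $A_1 \cap X_s \neq \emptyset$: the weakly MNW property of $A$ yields $\frac{f_1(X_s)}{v_1(A_1)} \ge \frac{f_j(X_s)}{v_j(A_j)}$ for every $j \ge 2$, and combining $v_1(A_1^+) = v_1(A_1)$ with $v_j(A_j^+) \ge v_j(A_j)$ preserves the inequality with the $A^+$-values on both sides, so the condition still holds.

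The main obstacle is pinpointing the correct construction. A tempting but flawed attempt is to simply set $A_1^+ = A_1 \cup I$: this makes agent $1$'s value strictly larger, which shrinks his ratios on every interval where he already had a piece and may break the weakly MNW condition on those other intervals. Freezing $A_1^+ = A_1$ and re-optimizing only among agents $\ge 2$ elegantly sidesteps this difficulty, because then agent $1$'s ratios remain unchanged while the ratios of agents $\ge 2$ only weakly decrease, which makes the required inequalities for agent $1$ strictly easier to satisfy rather than harder. I note in passing that the hypothesis $I \subseteq X_t$ is not actually used by the argument, though it is the form in which \Cref{lem:resource-monotonicity-for-weakly-MNW} is needed when later deducing \Cref{lem:value-in-weakly-MNW-allocations}.
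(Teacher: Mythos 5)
Your construction is exactly the paper's: freeze $A_1^+ = A_1$, let $(A_2^+,\ldots,A_n^+)$ be an MNW allocation for agents $2,\ldots,n$ on $[0,1]\setminus A_1$, invoke \Cref{thm:resource-monotonicity} on that $(n-1)$-agent subproblem for the value monotonicity, and verify the weak MNW condition by splitting into the case $i\ge 2$ (handled by \Cref{thm:nash-welfare-condition}) and the case $A_1\cap X_s\neq\emptyset$ (handled by the unchanged ratio for agent $1$ and the weakly increased values of the others). The proof is correct and essentially identical to the paper's.
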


\begin{proof}
    Note that $(A_2, \ldots, A_n)$ is an MNW allocation on $[0, 1] - I - A_1$.
    According to \Cref{thm:resource-monotonicity}, there exists an MNW allocation $(A_2^+, \ldots, A_n^+)$ for profile $(f_2, \ldots, f_n)$ on cake $[0, 1] - A_1$ such that $v_i(A_i) \leq v_i(A_i^+)$ for $i = 2, \ldots, n$.
    Let $A_1^+ = A_1$.
    It suffices to show that $A^+$ constructed above is weakly MNW.
    For each interval $X_t$, if $A_i \cap X_t \neq \emptyset$ and $i \geq 2$, then by the fact that $(A_2^+, \ldots, A_n^+)$ is MNW and \Cref{thm:nash-welfare-condition}, we have
    \begin{align*}
        \frac{f_i(X_t)}{v_i(A_i^+)} \geq \frac{f_j(X_t)}{v_j(A_j^+)}
    \end{align*}
    for $j = 2, \ldots, n$.
    If $A_1 \cap X_t \neq \emptyset$, then for $j = 2, \ldots, n$, it holds that
    \begin{align*}
        \frac{f_1(X_t)}{v_1(A_1^+)}
        = \frac{f_1(X_t)}{v_1(A_1)}
        \geq \frac{f_j(X_t)}{v_j(A_j)}
        \geq \frac{f_j(X_t)}{v_j(A_j^+)},
    \end{align*}
    where the first inequality is because $A$ is weakly MNW.
    This concludes the proof of \Cref{lem:resource-monotonicity-for-weakly-MNW}.
\end{proof}

The following lemma says that if we add a subset of resources that agent $1$ deserves to a weakly MNW allocation, then we can strictly increase agent $1$'s utility without violating the weak MNW property.

\begin{lemma}\label{lem:strict-resource-monotonicity-for-weak-MNW}
   Let $I \subseteq X_t$ be a subset of an interval $X_t$.
   Suppose that $A$ is a weakly MNW allocation on cake $[0, 1] - I$.
   If $\frac{f_1(I)}{v_1(A_1)} \geq \frac{f_j(I)}{v_j(A_j)}$ for all $j \in N$, then there exists a weakly MNW allocation $A^+$ on cake $[0, 1]$ such that $v_1(A_1) < v_1(A_1^+)$, and $v_i(A_i) \leq v_i(A_i^+)$ for $i = 2, \ldots, n$.
\end{lemma}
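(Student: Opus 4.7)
The plan is to construct $A^+$ by extending the weakly MNW allocation $\tilde{A}$ on $[0,1]$ produced by Lemma~A.1 and then perturbing it so that an infinitesimal sliver of $I$ is shifted onto agent~$1$. Recall that $\tilde{A}$ satisfies $\tilde{A}_1 = A_1$, with $(\tilde{A}_2, \ldots, \tilde{A}_n)$ an MNW of $[0,1] \setminus A_1$ under $(f_2, \ldots, f_n)$, and $v_j(\tilde{A}_j) \geq v_j(A_j)$ for every $j \geq 2$. Combining the deserving hypothesis $\frac{f_1(X_t)}{v_1(A_1)} \geq \frac{f_j(X_t)}{v_j(A_j)}$ with $v_j(\tilde{A}_j) \geq v_j(A_j)$ yields the key preliminary observation that agent~$1$'s ratio on $X_t$ weakly dominates every other agent's ratio on $X_t$ in $\tilde{A}$, which makes it plausible to push a small piece of $I$ onto agent~$1$ without destroying the weakly MNW property.

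Concretely, for a small $\epsilon > 0$, choose $I_\epsilon \subseteq I$ of length $\epsilon$, set $A^+_1 := A_1 \cup I_\epsilon$, and let $(A^+_2, \ldots, A^+_n)$ be any MNW allocation of $[0,1] \setminus A^+_1$ under $(f_2, \ldots, f_n)$. The value inequalities then follow immediately: $v_1(A^+_1) = v_1(A_1) + \epsilon f_1(X_t) > v_1(A_1)$ whenever $f_1(X_t) > 0$ (the degenerate case $f_1(X_t) = 0$ forces $f_j(X_t) = 0$ for every $j$ via the deserving hypothesis and must be handled separately), while Theorem~\ref{thm:resource-monotonicity} applied to $(A_2, \ldots, A_n)$ on $[0,1] \setminus A_1 \setminus I$ versus $(A^+_2, \ldots, A^+_n)$ on the strictly larger cake $[0,1] \setminus A_1 \setminus I_\epsilon$ gives $v_j(A^+_j) \geq v_j(A_j)$ for each $j \geq 2$. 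The weakly MNW conditions indexed by $j \geq 2$ are automatic from the MNW property of $(A^+_2, \ldots, A^+_n)$ via Theorem~\ref{thm:nash-welfare-condition}; for agent~$1$, combining the weakly MNW condition of $A$ (when $X_s \neq X_t$) or the deserving hypothesis (when $X_s = X_t$) with $v_l(A^+_l) \geq v_l(A_l)$ gives $\frac{f_1(X_s)}{v_1(A_1)} \geq \frac{f_l(X_s)}{v_l(A^+_l)}$, and a continuity argument in $\epsilon$ upgrades this to $\frac{f_1(X_s)}{v_1(A^+_1)} \geq \frac{f_l(X_s)}{v_l(A^+_l)}$ for $\epsilon$ sufficiently small, provided the former inequality is strict.

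The main obstacle is the boundary situation in which both the underlying inequality (weak MNW of $A$ for $X_s \neq X_t$, or the deserving hypothesis for $X_s = X_t$) and the inequality from Theorem~\ref{thm:resource-monotonicity} are simultaneously tight for some pair $(X_s, l)$; the continuity argument then breaks, since $\frac{f_1(X_s)}{v_1(A^+_1)}$ strictly decreases in $\epsilon$ while $\frac{f_l(X_s)}{v_l(A^+_l)}$ does not. The intended resolution is to show that tightness of the deserving inequality forces $X_t$ to lie among agent~$l$'s maximum-ratio intervals at the MNW of $[0,1] \setminus A_1 \setminus I$, so that enlarging the cake by $(I \setminus I_\epsilon) \subseteq X_t$ in the re-optimisation step strictly raises $v_l(A^+_l)$ above $v_l(A_l)$; an analogous argument handles the case $X_s \neq X_t$ via the weakly MNW condition of $A$. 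Formalising this ``strict resource monotonicity under tightness'' is the core technical step and is likely best approached via the CEEI characterisation of MNW.
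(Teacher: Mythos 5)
There is a genuine gap, located exactly where you flag ``the core technical step'': the tight case is not a removable boundary issue, and the repair you sketch is false. Concretely, let the cake be $[0,3]$ with $X_1=[0,1]$, $X_2=[1,2]$, $X_3=[2,3]$ and $I=[2.5,3]$, and let $f_1=1$ on $X_1\cup X_3$, $f_2=1$ on $X_1$ only, $f_3=1$ on $X_2\cup X_3$ (zero elsewhere). The allocation $A_1=[0,0.5]$, $A_2=[0.5,1]$, $A_3=[1,2.5]$ is weakly MNW on $[0,3]-I$ with $v_1(A_1)=v_2(A_2)=0.5$ and $v_3(A_3)=1.5$, and agent $1$ deserves $I$, since $f_1(I)/v_1(A_1)=2$ dominates $f_2(I)/v_2(A_2)=0$ and $f_3(I)/v_3(A_3)=2/3$. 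Your construction sets $A_1^+=[0,0.5]\cup I_\epsilon$ and re-optimises agents $2,3$ on the complement. But agent $2$ values nothing outside $X_1$, so $v_2(A_2^+)=0.5$ for every choice of $\epsilon$ and every re-optimisation, while $f_1(X_1)/v_1(A_1^+)=1/(0.5+\epsilon)<2=f_2(X_1)/v_2(A_2^+)$; the weak MNW condition for agent $1$ on $X_1$ fails for every $\epsilon>0$. Your proposed resolution --- that tightness forces $v_l(A_l^+)$ to rise strictly once $I\setminus I_\epsilon$ is returned to the sub-market of agents $2,\dots,n$ --- cannot hold here because $f_2(X_3)=0$: the tie between agents $1$ and $2$ lives on $X_1$ and carries no information about agent $2$'s interest in $X_t$.

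The deeper problem is that $A_1^+=A_1\cup I_\epsilon$ is the wrong object, not merely an under-analysed one: in the example, any weakly MNW allocation on $[0,3]$ improving agent $1$ must either remove agent $1$ from $X_1$ or strictly raise agent $2$'s utility, and neither can be achieved by re-optimising agents $2,\dots,n$ on the fixed complement of $A_1\cup I_\epsilon$. The paper's proof instead has agent $1$ \emph{cede} portions of the intervals on which he is tied while taking the sliver of $I$: it builds a forest of agents linked by tight ratio constraints, rooted at the set of agents deserving $I$, and cascades transfers through the forest so that every agent in it has utility multiplied by exactly $(1+\epsilon)$, preserving all ties as ties; \Cref{lem:resource-monotonicity-for-weakly-MNW} is then invoked only at the end to extend to the full cake. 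Some such rebalancing of agent $1$'s existing holdings is unavoidable, so your outline cannot be completed as stated. The portions of your argument that do go through (resource monotonicity for $j\geq 2$, the automatic weak MNW conditions for $i\geq 2$, and the continuity argument when every relevant inequality is strict) cover only the easy half of the problem.
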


\begin{proof}
    We aim to construct a new weakly MNW allocation $A'$ on $([0, 1] - I) \cup \Upsilon$ for some subset $\Upsilon \subseteq I$ such that $v_1(A_1) < v_1(A_1')$ and $v_i(A_i) \leq v_i(A_i')$ for $i = 2, \ldots, n$, and then \Cref{lem:strict-resource-monotonicity-for-weak-MNW} is immediate by \Cref{lem:resource-monotonicity-for-weakly-MNW}.
    To this end, we start by allocating a small fraction of $I$ to each of the agents who deserve it such that the utility of each of them increases by the same factor.
    However, it may cause the issue that they no longer deserve some of their received subsets in $A$ due to the increase in utilities.
    We reconcile this by repeatedly "balancing" the utilities among agents that receive parts of the same interval in $A$ so that they can keep in $A'$ what they receive in $A$.

    We construct a directed graph $G$ with $n$ vertices representing $n$ agents, and we will use the terms agent and vertex interchangeably.
    There is an edge pointing from $i$ to $j$ whenever the following condition is satisfied for some interval $X_t$:
    \begin{align*}
        A_j \cap X_t \neq \emptyset \quad \text{and} \quad \frac{f_i(X_t)}{v_i(A_i)} = \frac{f_j(X_t)}{v_j(A_j)},
    \end{align*}
    and we use $S_{i,j}$ to denote this interval $X_t$ (if there are multiple such intervals, then choose an arbitrary one).
    In other words, edge $(i, j)$ exists if and only if there is an interval $X_t$ such that agent $i$ deserves $X_t$ and agent $j$ receives some of $X_t$.

    For an edge $(i, j)$ in $G$, if $f_i(S_{i, j}) = 0$, then $f_j(S_{i, j}) = 0$ by definition, which implies that $f_k(S_{i, j}) = 0$ for $k = 2, \ldots, n$ since $A$ is weakly MNW.
    Due to the free disposal assumption, we must have $f_1(S_{i, j}) > 0$.
    By transferring subset $A_j \cap S_{i, j}$ from $A_j$ to $A_1$, we obtain an allocation $A'$ satisfying $v_1(A_1') > v_1(A_1)$ and $v_k(A_k') = v_k(A_k)$ for $k = 2, \ldots, n$, and then \Cref{lem:strict-resource-monotonicity-for-weak-MNW} is immediate by \Cref{lem:resource-monotonicity-for-weakly-MNW}.
    From now on, assume that $f_i(S_{i, j}) > 0$ and $f_j(S_{i, j}) > 0$ for each edge $(i, j)$ in $G$.
    
    Define $R$ to be the set containing all agents $i$ such that $\frac{f_i(I)}{v_i(A_i)}$ is maximum.
    By the assumption that $\frac{f_1(I)}{v_1(A_1)} \geq \frac{f_j(I)}{v_j(A_j)}$ for all $j \in N$, we have $1 \in R$.
    We use the following procedure to construct a subgraph $T$ which is a forest:
    \begin{enumerate}
        \item Initialize $V(T) = R$ and $E(T) = \emptyset$.
        
        \item Whenever there is an edge $(i, j)$ such that $i \notin T$ and $j \in T$, add $i$ to $V(T)$ and $(i, j)$ to $E(T)$.
    \end{enumerate}
    It is easy to see that the second step terminates within $n$ steps, and thus $T$ is well-defined.
    For each $(i, j) \in E(T)$, let $i$ be a child of $j$.
    Thus, $T$ is a forest and $R$ is the set of roots in $T$.
    Define $\ell(i)$ as the size of the subtree rooted at $i$.

    Let $\epsilon$ be a sufficiently small real number.
    We modify the allocation $A$ to get $A'$ in the following way:
    \begin{enumerate}
        \item Each agent $i \in R$ takes a length of $\epsilon \ell(i) \frac{v_i(A_i)}{f_i(I)}$ from $I$.
        Denote $\Upsilon$ as the total portion on $I$ allocated to agents in $R$.

        \item For every edge $(i, j) \in E(T)$, transfer a legnth of $\epsilon \ell(i) \frac{v_i(A_i)}{f_i(S_{i,j})}$ in $A_j \cap S_{i, j}$ from $A_j$ to $A_i$.
    \end{enumerate}
    Note that the construction is valid since $\epsilon$ is made sufficiently small, and the resulting allocation $A'$ is on cake $([0, 1] - I) \cup \Upsilon$ where $\Upsilon \subseteq I$.
    Moreover, for each agent $j \in R$,
    \begin{align*}
        v_j(A_j')
        &= v_j(A_j) + f_j(I) \cdot \epsilon \ell(j) \frac{v_j(A_j)}{f_j(I)} - \sum_{(i, j) \in E(T)} f_j(S_{i, j}) \cdot \epsilon \ell(i) \frac{v_i(A_i)}{f_i(S_{i, j})}\\
        &= v_j(A_j) + \epsilon \ell(j) v_j(A_j) - \sum_{(i, j) \in E(T)} \epsilon \ell(i) v_j(A_j)\\
        &= v_j(A_j) + \epsilon \ell(j) v_j(A_j) - \epsilon v_j(A_j) (\ell(j) - 1)\\
        &= (1 + \epsilon) v_j(A_j),
    \end{align*}
    where the second equality is by the fact that $\frac{f_i(S_{i, j})}{v_i(A_i)} = \frac{f_j(S_{i, j})}{v_j(A_j)}$.
    Similarly, for each agent $j \in V(T) \setminus R$ with agent $k$ as his parent in $T$, 
    \begin{align*}
        v_j(A_j')
        = v_j(A_j) + f_j(S_{j, k}) \cdot \epsilon \ell(j) \frac{v_j(A_j)}{f_j(S_{j, k})} - \sum_{(i, j) \in E(T)} f_j(S_{i, j}) \cdot \epsilon \ell(i) \frac{v_i(A_i)}{f_i(S_{i, j})}
        = (1 + \epsilon) v_j(A_j).
    \end{align*}
    Finally, for each agent $j \notin V(T)$, $v_j(A_j') = v_j(A_j)$.
    Since $1 \in R$, we have $v_1(A_1') = (1 + \epsilon) v_1(A_1) > v_1(A_1)$.

    It remains to show that $A'$ is weakly MNW on $([0, 1] - I) \cup \Upsilon$.
    For agent $i$ and interval $X_t$ such that $A_i' \cap X_t \neq \emptyset$, we show that $\frac{f_i(X_t)}{v_i(A_i')} \geq \frac{f_j(X_t)}{v_j(A_j')}$ for $j = 2, \ldots, n$.
    \begin{itemize}
        \item If $i \notin V(T)$, then $A_i' = A_i$ and $A_i \cap X_t \neq \emptyset$.
        Since $A$ is weakly MNW and $v_j(A_j') \geq v_j(A_j)$ for all $j \in N$, we have $\frac{f_i(X_t)}{v_i(A_i')} = \frac{f_i(X_t)}{v_i(A_i)} \geq \frac{f_j(X_t)}{v_j(A_j)} \geq \frac{f_j(X_t)}{v_j(A_j')}$ for $j = 2, \ldots, n$.

        \item If $i \in V(T)$, then $v_i(A_i') = (1 + \epsilon) v_i(A_i)$.
        We consider two cases regarding whether $A_i \cap X_t$ is empty.

        If $A_i \cap X_t \neq \emptyset$, then $\frac{f_i(X_t)}{v_i(A_i)} \geq \frac{f_j(X_t)}{v_j(A_j)}$ for $j = 2, \ldots, n$ since $A$ is weakly MNW.
        For those $j$ such that $\frac{f_i(X_t)}{v_i(A_i)} > \frac{f_j(X_t)}{v_j(A_j)}$, we can make $\epsilon$ small enough so that $\frac{f_i(X_t)}{v_i(A_i')} = \frac{f_i(X_t)}{(1 + \epsilon)v_i(A_i)} > \frac{f_j(X_t)}{v_j(A_j)} \geq \frac{f_j(X_t)}{v_j(A_j')}$.
        For those $j$ such that $\frac{f_i(X_t)}{v_i(A_i)} = \frac{f_j(X_t)}{v_j(A_j)}$, there is an edge from $j$ to $i$ in $G$, implying $j \in V(T)$.
        Since $v_i(A_i') = (1 + \epsilon) v_i(A_i)$ and $v_j(A_j') = (1 + \epsilon) v_j(A_j)$, we have $\frac{f_i(X_t)}{v_i(A_i')} = \frac{f_j(X_t)}{v_j(A_j')}$ as well.

        If $A_i \cap X_t = \emptyset$, since we assume $A_i' \cap X_t \neq \emptyset$, agent $i$ must have acquired from his parent in $T$, denoted as agent $k$, some part of $X_t = S_{i, k}$ unless $i \in R$, in which case the weakly MNW condition clearly holds.
        As a result, $A_k \cap X_t \neq \emptyset$ and $\frac{f_i(X_t)}{v_i(A_i)} = \frac{f_k(X_t)}{v_k(A_k)}$ as edge $(i, k)$ exists.
        Since $A$ is weakly MNW, $\frac{f_i(X_t)}{v_i(A_i)} = \frac{f_k(X_t)}{v_k(A_k)} \geq \frac{f_j(X_t)}{v_j(A_j)}$ for $j = 2, \ldots, n$.
        Similarly, for those $j$ such that $\frac{f_i(X_t)}{v_i(A_i)} > \frac{f_j(X_t)}{v_j(A_j)}$, we can make $\epsilon$ small enough so that $\frac{f_i(X_t)}{v_i(A_i')} \geq \frac{f_j(X_t)}{v_j(A_j')}$.
        For those $j$ such that $\frac{f_i(X_t)}{v_i(A_i)} = \frac{f_j(X_t)}{v_j(A_j)}$, there is an edge from $j$ to $i$ in $G$, implying $j \in V(T)$.
        Thus, $\frac{f_i(X_t)}{v_i(A_i')} = \frac{f_j(X_t)}{v_j(A_j')}$.
    \end{itemize}
    Therefore, $A'$ is weakly MNW, concluding the proof of \Cref{lem:strict-resource-monotonicity-for-weak-MNW}.
\end{proof}

Now we are ready to prove \Cref{lem:value-in-weakly-MNW-allocations}, and the technique applied here is somehow analogous to the proof of \Cref{lem:strict-resource-monotonicity-for-weak-MNW}.
Assume for contradiction that $A = (A_1, \ldots, A_n)$ is a weakly MNW allocation such that $v_1(A_1)$ is the maximum over all weakly MNW allocations but $A$ is not MNW.
We will show that there exists another weakly MNW allocation $A' = (A_1', \ldots, A_n')$ such that $v_1(A_1') > v_1(A_1)$, which contradicts the maximality of $v_1(A_1)$.

We construct a directed graph $G$ as follows: the graph consists of $n$ vertices representing $n$ agents, and there is an edge from $i$ to $j$ if and only if on some interval $S_{i, j}$ (which is one of $\{X_1, \ldots, X_m\}$), we have
\begin{enumerate}
    \item If $i > 1$, then $\frac{f_i(S_{i, j})}{v_i(A_i)} = \frac{f_j(S_{i, j})}{v_j(A_j)}$ and $A_j \cap S_{i, j} \neq \emptyset$.

    \item If $i = 1$, then $\frac{f_i(S_{i, j})}{v_i(A_i)} > \frac{f_j(S_{i, j})}{v_j(A_j)}$ and $A_j \cap S_{i, j} \neq \emptyset$.
\end{enumerate}
As in the proof of \Cref{lem:strict-resource-monotonicity-for-weak-MNW}, we assume that $f_i(S_{i, j}) > 0$ and $f_j(S_{i, j}) > 0$ for each edge $(i, j)$ in $G$.
By our assumption that $A$ is not MNW, the outdegree of vertex $1$ must be at least one.
We consider two cases regarding whether vertex $1$ is contained in a cycle in $G$.

\paragraph{Case 1: vertex $1$ is in a cycle $C$}
Without loss of generality, assume the cycle to be $1 \to 2 \to \ldots \to k \to 1$.
Then we construct a new allocation $A' = (A_1', \ldots, A_n')$ as follows, with $\epsilon > 0$ sufficiently small:
\begin{enumerate}
    \item $A_i' = A_i$ for $i \geq k + 1$.
    \item For $i = 2, \ldots, k$, denote $j$ as the next vertex of $i$ in $C$, and transfer a length of $\frac{v_{j}(A_{j})}{f_{j}(S_{i, j})} \epsilon$ in $A_{j} \cap S_{i, j}$ from $A_{j}$ to $A_i$.
    \item Cut a subset of $A_2 \cap S_{1, 2}$ with length $\frac{v_2(A_2)}{f_2(S_{1, 2})}$ from $A_2$, and allocate a subset of it with length $\frac{v_1(A_1)}{f_1(S_{1, 2})} \epsilon$ to $A_1$.
    Leave the rest of the subset of length $\frac{v_2(A_2)}{f_2(S_{1, 2})} \epsilon - \frac{v_1(A_1)}{f_1(S_{1, 2})} \epsilon$, denoted by $I$, unallocated temporarily.
\end{enumerate}
It is easy to verify that $v_i(A_i') = v_i(A_i)$ for all $i \in N$ as in the cycle, each agent $i$ receives a portion with value $f_i(S_{i, j}) \frac{v_{j}(A_{j})}{f_{j}(S_{i, j})} \epsilon = \epsilon v_i(A_i)$ from the next agent $j$ and donates a portion with the same value to the last agent.
This implies that all existing inequalities in the weak MNW condition on all intervals still hold.

Since $A_2 \cap S_{1, 2} \neq \emptyset$ and $A$ is weakly MNW, we have $\frac{f_1(I)}{v_1(A_1')} = \frac{f_1(S_{1, 2})}{v_1(A_1)} > \frac{f_2(S_{1, 2})}{v_2(A_2)} \geq \frac{f_i(S_{1, 2})}{v_i(A_i)} = \frac{f_i(I)}{v_i(A_i')}$ for $i = 2, \ldots, n$.
By applying \Cref{lem:strict-resource-monotonicity-for-weak-MNW} to $A'$ and $I$, we know that there is a weakly MNW allocation $(A_1^+, \ldots, A_n^+)$ on the cake $[0, 1]$ such that $v_1(A_1^+) > v_1(A'_1) = v_1(A_1)$, leading to a contradiction.

\paragraph{Case 2: vertex $1$ is not in a cycle}
As the outdegree of vertex $1$ is at least one, assume without loss of generality that there is an edge from $1$ to $2$.
Let $V_1$ be the set of vertices from which vertex $1$ is reachable (including vertex $1$), and let $T_1$ be a spanning tree rooted at $1$ on the subgraph induced by $V_1$ with each edge pointing from the child to the parent.
Analogously, let $V_2$ be the set of vertices that are reachable from vertex $2$, and let $T_2$ be a spanning tree rooted at $2$ on the subgraph induced by $V_2$ with each edge pointing from the parent to the child.
Note that the edge directions in $T_1$ and $T_2$ are different.
Since vertex $1$ is not in a cycle, it holds that $V_1 \cap V_2 = \emptyset$.

Now we apply the technique used in the proof of \Cref{lem:strict-resource-monotonicity-for-weak-MNW} to take a sufficiently small portion $\Upsilon$ in $S_{1, 2} \cap A_2$ from $A_2$ and allocate it among agents in $T_1$ so that the value received by each agent in $T_1$ increases by the same factor.

To compensate agent $2$ for donating $\Upsilon$, we apply a reversed technique as in the proof of \Cref{lem:strict-resource-monotonicity-for-weak-MNW}.
Specifically, for each edge $(i, j)$ in $T_2$, we transfer a small portion in $A_j \cap S_{i, j}$ from $A_j$ to $A_i$.
Notice that the directions of all edges in $T_2$ are the reverse of those in $T$ in the proof of \Cref{lem:strict-resource-monotonicity-for-weak-MNW}.
Therefore, each agent in $T_2$ receives portions from his children (except for leaves who do not receive anything), and donates a portion to his parent (except for agent $2$ who directly donates $\Upsilon$ to agent $1$).
This is again the reverse of what we did in the proof of \Cref{lem:strict-resource-monotonicity-for-weak-MNW}, where each agent donates to his children and receives from his parent.
By a similar technique, we can ensure that the value received by each agent in $T_2$ decreases by the same factor.

After performing all the operations described above, denoting the resulting allocation as $(A_1', \ldots, A_n')$, there exist sufficiently small constants $\epsilon, \delta > 0$ such that $v_i(A_i') = (1 + \epsilon) v_i(A_i)$ for all $i \in V_1$, $v_i(A_i') = (1 - \delta) v_i(A_i)$ for all $i \in V_2$, and $v_i(A_i') = v_i(A_i)$ for all the remaining agents $i$.
Applying the same arguments in the proof of \Cref{lem:strict-resource-monotonicity-for-weak-MNW}, the weak MNW property can still be preserved for $A'$ given a small choice of $\Upsilon$.
Moreover, since agent $1$ is in $T_1$, we have $v_1(A_1') = (1 + \epsilon) v_1(A_1) > v_1(A_1)$, leading to a contradiction.
\section{Proof of \Cref{thm:interp-lb}}
\label{sec:proof-of-interp-lb}

    We will show that there is no mechanism with an incentive ratio at most $2^{1 - c}$ that can guarantee an approximation ratio for MNW greater than $2^{-c} \cdot \frac{n + 1}{n} + \epsilon$ for any constant $\epsilon > 0$ for all $n$-agent instances.
    Note that as $n \to \infty$ and $\epsilon \to 0$, the approximation ratio converges to $2^{-c}$ as promised by the theorem.
    Let $\rho = 2^{-c} \cdot \frac{n + 1}{n}$.
    Assume for contradiction that mechanism $M$ is $\left( \rho + \epsilon \right)$-MNW for some positive constant $\epsilon$ with incentive ratio $2^{1 - c}$ for all $n$-agent instances.
    We will construct $n + 1$ instances with $n$ agents and $m = (k + 1) n$ items, where $k > \frac{2}{\epsilon}$ is sufficiently large, and show that the incentive ratio and the approximation ratio for MNW cannot be simultaneously guaranteed in all these instances.

    In instance $i \leq n$, every agent $j \neq i$ has a valuation of $kn + 1$ for item $j$ and a valuation of $1$ for every other item; agent $i$ has a valuation of $1$ for all items.
    In other words, each agent except agent $i$ has a strong preference for one item, which is different for each one of them.
    In an MNW allocation for instance $i$, each agent $j \neq i$ receives the entire item $j$, and agent $i$ receives the remaining $nk + 1$ items.
    Denote those $nk + 1$ items received by agent $i$ in the MNW allocation as $G_i$.
    Since $M$ is $(\rho + \epsilon)$-MNW, agent $i$ must receive at least a total value of $(\rho + \epsilon) (kn + 1)$ under $M$, and thus the total value he receives in $G_i$ is at least
    \begin{align*}
        (\rho + \epsilon) (kn + 1) - (n - 1) \geq \left( \rho + \frac{\epsilon}{2} \right) (kn + 1)
    \end{align*}
    by the assumption that $k > \frac{2}{\epsilon}$.
    As a result, there is an item in $G_i$ such that at least $\left(\rho + \frac{\epsilon}{2}\right)$ fraction of it is allocated to agent $i$ under $M$.
    Without loss of generality, assume that it is item $i$.
    Thus, we have shown that, for instance $i \leq n$, mechanism $M$ will allocate to agent $i$ at least $\left(\rho + \frac{\epsilon}{2}\right)$ fraction of item $i$ and at least a total value of $\left( \rho + \frac{\epsilon}{2} \right)(kn + 1)$ in $G_i$.

    Now we define instance $n + 1$, in which every agent $i$ has a valuation of $nk + 1$ for item $i$ and a valuation of $1$ for all other items.
    In an MNW allocation for instance $n + 1$, each agent $i$ receives the entire item $i$ together with a total value of $k$ in items $n + 1, \ldots, (k + 1) n$.
    Note that in instance $n + 1$, every agent $i$ can unilaterally misreport his valuation to lead to instance $i$, and by doing this, his received value with respect to his valuation in instance $n + 1$ is at least
    \begin{align*}
        \left( \rho + \frac{\epsilon}{2} \right)(kn + 1) + \left( \rho + \frac{\epsilon}{2} \right) kn \geq \left( \rho + \frac{\epsilon}{2} \right) 2kn,
    \end{align*}
    where in the left hand side, the first term comes from the fraction of item $i$ that agent $i$ receives and the second term comes from the average fraction of the remaining items.
    Plugging in $\rho = 2^{-c} \cdot \frac{n + 1}{n}$ and by the incentive ratio of $2^{1 - c}$, the value received by agent $i$ in instance $n + 1$ under $M$ is at least
    \begin{align*}
        2^{c - 1} \cdot \left(2^{-c} \cdot \frac{n + 1}{n} + \frac{\epsilon}{2}\right) 2kn
        = kn + k + 2^{c - 1} \cdot \epsilon kn
        \geq kn + k + 2
    \end{align*}
    by the assumptions that $k > \frac{2}{\epsilon}$ and $n \geq 2$.
    However, the value received by each agent in an MNW allocation of instance $n + 1$ is $kn + k + 1$, contradicting the Pareto optimality of the MNW allocation.
\section{Proof of \Cref{theorem-n-lower-bound}}
\label{sec:proof-of-theorem-n-lower-bound}

We show this by providing an example. Our example consists of $n=8k-3$ agents, and we will use many different letters, $p,a,b,c$, to represent agents instead of just $a$, because there are too many agents in this example. To prove the validity of the example, we show that
\begin{enumerate}
  \item there are two agents $p_1$ and $p_2$ such that at least one of them can at most receive a value that is a little bit more than 1;
  \item both agents $p_1$ and $p_2$ can misreport their functions and guarantee a value almost $k$ (which is $\Theta(n)$).
\end{enumerate}

We use a ``toy" example with $k=3$ for illustration in this proof, and the analysis for general $k$ is hardly different.
Instead of $[0,1]$, let the cake be $[-105,105]$ which can be easily normalized to $[0,1]$.
We divide the agents into 5 groups, and in the following description, $f$ is zero on the unspecified intervals.
The density functions for all agents are shown in Figure \ref{EC1}.
\begin{itemize}
      \item Group 1: Agent $p_1,p_2$ shown in black color, which are the two agents mentioned earlier.
      \begin{eqnarray*}
        &&f_{p_1}(x)=1\mbox{ on }[-89,-88]\cup[-82,-81]\cup[-75,-74]\cup[97.6,98.7]\\
        &&f_{p_2}(x)=1\mbox{ on }[-98.7,-97.6]\cup[74,75]\cup[81,82]\cup[88,89]
      \end{eqnarray*}

      \item Group 2: Agent $a_{11},a_{12},b_{11},b_{12}$ shown in green color.
      \begin{eqnarray*}
        &&f_{a_{11}}(x)=1\mbox{ on }[-86.9,-85]\cup[-70,-69]\cup[-61,-60]\\
        &&f_{a_{12}}(x)=1\mbox{ on }[-79.9,-78]\cup[-34,-33]\cup[-25,-24]\\
        &&f_{b_{11}}(x)=1\mbox{ on }[2,3]\cup[11,12]\cup[76.1,78]\\
        &&f_{b_{12}}(x)=1\mbox{ on }[38,39]\cup[47,48]\cup[83.1,85]
      \end{eqnarray*}

      \item Group 3: Agent $a_{21},a_{22},b_{21},b_{22}$ shown in blue color.
      \begin{eqnarray*}
        &&f_{a_{21}}(x)=1\mbox{ on }[-85,-83.1]\cup[-48,-47]\cup[-39,-38]\\
        &&f_{a_{22}}(x)=1\mbox{ on }[-78,-76.1]\cup[-12,-11]\cup[-3,-2]\\
        &&f_{b_{21}}(x)=1\mbox{ on }[24,25]\cup[33,34]\cup[78,79.9]\\
        &&f_{b_{22}}(x)=1\mbox{ on }[60,61]\cup[69,70]\cup[85,86.9]
      \end{eqnarray*}

      \item Group 4: Agent $a_{31},a_{32},b_{31},b_{32}$ shown in red color.
      \begin{eqnarray*}
        &&f_{a_{31}}(x)=1\mbox{ on }[-103.3,-93]\cup[-70,-60]\cup[-59,-49]\cup[-48,-38]\\
        &&f_{a_{32}}(x)=1\mbox{ on }[-103.3,-93]\cup[-34,-24]\cup[-23,-13]\cup[-12,-2]\\
        &&f_{b_{31}}(x)=1\mbox{ on }[2,12]\cup[13,23]\cup[24,34]\cup[93,103.3]\\
        &&f_{b_{32}}(x)=1\mbox{ on }[38,48]\cup[49,59]\cup[60,70]\cup[93,103.3]
      \end{eqnarray*}

      \item Group 5: Agent $c_1,c_2,c_3,c_4,c_5,c_6,c_7$ shown in red dots. They are shown by dots because their valued intervals have a negligible length which is $0.2$.
      $$f_{c_1}(x)=1\mbox{ on }[-91.1,-90.9]\qquad f_{c_2}(x)=1\mbox{ on }[-72.1,-71.9]$$
      $$f_{c_3}(x)=1\mbox{ on }[-36.1,-35.9]\qquad f_{c_4}(x)=1\mbox{ on }[-0.1,0.1]\qquad f_{c_5}(x)=1\mbox{ on }[35.9,36.1]$$
      $$f_{c_6}(x)=1\mbox{ on }[71.9,72.1]\qquad f_{c_7}(x)=1\mbox{ on }[90.9,91.1]$$

\end{itemize}

Notice that all the density functions have value either 1 or 0 according to the above definition. To make the figure look nicer, the functions are drawn with slightly different heights, but they should represent the same value 1.

For a clearer description, we divide the cake into 6 areas (which are labelled by Roman numbers in Figure \ref{EC1}):
$$\mbox{Area I: }[-105,91]\qquad\mbox{Area II: }[-91,-72]\qquad \mbox{Area III: }[-72,0]$$
$$\mbox{Area IV: }[0,72]\qquad\mbox{Area V: }[72,91]\qquad \mbox{Area VI: }[91,105]$$
We can see that agents $c_1,c_2,c_4,c_6,c_7$ served as ``boundaries" of these areas.
We should always notice that all the agents must receive connected pieces in this Theorem. As a result, no agent can receive an allocation in more than one Areas. Otherwise one or more agents from Group 5 will receive no value at all.

\begin{figure}
    \centering
    \includegraphics{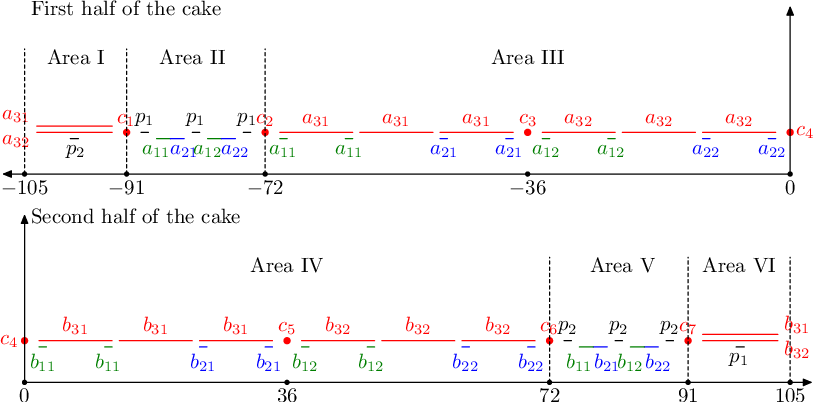}
    \caption{The Example with $k=3$}\label{EC1}
\end{figure}

To show how this example works, the following lemma is a crucial observation which is used many times in the proof. Notice that agent $p_1$ has three segments (with value 1 for each) in Area II and one segment (with value 1.1) in Area VI, and agent $p_2$ has three segments (with value 1 for each) in Area V and one segment (with value 1.1) in Area I which is symmetric to agent $p_1$.
\begin{lemma}\label{theta_n_lemma}
  If agent $p_1$ gets more than one segment in Area II, then agent $p_2$ has to take his valued interval in Area I which has value at most 1.1 (which means he cannot take any or any union of the three segments in Area V). Symmetrically, if agent $p_2$ gets more than one segment in Area V, then agent $p_1$ has to take his valued interval in Area VI.
\end{lemma}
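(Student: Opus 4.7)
The plan is to prove the lemma by first using envy-freeness plus the connected-pieces constraint to confine each bundle to a single area, and then showing that if $p_1$ over-occupies Area II, a cascade of displacements forces the $b$-agents to fully cover Area V, leaving $p_2$ no option but Area I. I argue the first half of the statement; the second half follows by the left-right symmetry of the construction about $x = 0$.

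First, I would verify that no bundle crosses an area boundary. Each of the seven ``gatekeeper'' agents $c_1, \ldots, c_7$ has density $1$ on a tiny interval of length $0.2$ straddling exactly one boundary, and density $0$ elsewhere. If some agent's connected bundle crossed such a boundary, it would engulf the entire valued interval of the corresponding $c_i$; envy-freeness would then require $v_{c_i}(A_{c_i}) \geq 0.2$, which is impossible because $c_i$ has positive value only on that very interval. Hence every bundle, including those of $p_1$ and $p_2$, lies inside a single one of the six areas.

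Next, suppose $p_1$'s bundle contains at least two of his Area II segments $[-89,-88]$, $[-82,-81]$, $[-75,-74]$. Being connected, it spans an interval of length at least $7$ inside Area II, and a direct check on the positions of the Area II segments of $a_{11}, a_{12}, a_{21}, a_{22}$ shows that $p_1$'s bundle must entirely cover the Area II segment of one or more of these agents. Each displaced $a$-agent then has strictly positive value on $p_1$'s bundle, so envy-freeness forces him to receive at least that much value from his own bundle; under the single-area restriction the only viable placement is Area III, covering his remaining segments there. Using the value-$10$ segments of $a_{31}$ and $a_{32}$ in Area III, I would then argue that the displaced $a$-agents together with $a_{31}, a_{32}$ cannot simultaneously maintain envy-freeness within Area III, forcing at least one agent to spill into Area I.

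Finally, by a counting/symmetry argument transferring this imbalance across the cake, the $b$-agents are squeezed into Area V and their bundles collectively cover all three of $p_2$'s value-$1$ segments $[74,75]$, $[81,82]$, $[88,89]$. Hence $p_2$ cannot place his bundle anywhere in Area V, and by the single-area restriction his only remaining option with positive value is the Area I interval $[-98.7,-97.6]$, of total value $1.1$, giving the claimed bound. The symmetric statement follows by reflection of the construction. The main obstacle will be executing the middle step cleanly: verifying via case analysis that every configuration of $p_1$'s two-segment Area II bundle triggers the full cascade, and in particular ruling out borderline placements in which an $a$- or $b$-agent survives envy-freeness with a surprisingly small bundle in Area IV or VI and thereby short-circuits the chain.
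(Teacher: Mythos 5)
There is a genuine gap at the key step. Your opening move is fine and matches the paper: the gatekeeper agents $c_1,\ldots,c_7$ confine every connected bundle to a single area, and if $p_1$ takes two of his Area~II segments he engulfs the Area~II segments of two of $a_{11},a_{12},a_{21},a_{22}$ (say $a_{11},a_{21}$), who are then forced to take almost all of their remaining value in Area~III. But from there your plan loses the thread. The actual mechanism is local to the left end of the cake: because $a_{11}$ and $a_{21}$ must each take value at least $1.9$ in Area~III, agent $a_{31}$ sees value at least $9.9$ in each of their bundles, so $a_{31}$ cannot afford to sit in Area~I (he would be envied) and is pinned to the middle segment of Area~III, worth at most $10.2$ to him. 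His Area~I segment $[-103.3,-93]$ is worth $10.3$, so to keep $a_{31}$ envy-free it must be \emph{split between at least two} other agents; the only agents who can take a positively-valued connected piece there are $a_{32}$, $c_1$, and $p_2$. The delicate part of the proof --- which your proposal does not contain --- is ruling out $a_{32}$: if $a_{32}$ went to Area~I he could take at most $10.2$, his three Area~III segments would have to be shared among $c_3,c_4,a_{12},a_{22}$ with each getting at most $10.2$ in $a_{32}$'s valuation, which starves $a_{12}$ and $a_{22}$ below $1.9$ and forces their Area~II segments to be split among three or more agents, impossible under the connected-pieces constraint. Only then does it follow that $p_2$ (together with $c_1$) must cover part of Area~I, hence cannot touch Area~V.

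Your substitute for this step --- a ``counting/symmetry'' cascade in which the $b$-agents get squeezed into Area~V and collectively cover $p_2$'s three segments there --- does not work. Nothing that happens in Areas~I--III displaces the $b$-agents, whose valued segments all lie in Areas~IV--VI; and even if they did occupy Area~V, mere coverage of $p_2$'s segments is not the operative constraint (one would still have to show $p_2$ cannot take a positively-valued connected piece there consistently with envy-freeness). Saying ``at least one agent is forced to spill into Area~I'' is not enough: the lemma is only proved if that agent is $p_2$ specifically, and identifying $p_2$ requires eliminating $a_{32}$ as above.
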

\begin{proof}
  Let us see what happens if agent $p_1$ gets more than one segment in Area II. Without loss of generality, assume $p_1$ takes the first two intervals, which includes agents $(a_{11})'s,(a_{21})'s$ parts with value 1.9. Agents $a_{11},a_{21}$ then have to take from Area III, and out of total value 2, both of them will take at least 1.9 to keep envy-freeness. Now consider agent $a_{31}$. Each of agents $a_{11},a_{21}$ takes at least value 9.9 in $(a_{31})'s$ evaluation. Notice that $a_{31}$ cannot take cake in Area I, for otherwise one of $a_{11},a_{21}$ will take a value of at least 14.9 (the exact 14.9 happens when the middle segment of $a_{31}$ in Area III is shared evenly between $a_{31}$ and $a_{32}$) in his evaluation, which is more than all what he can get in Area I (which is 10.3). So $a_{31}$ can only get the middle segment in Area III. In particular, he can get at most value 10.2. Therefore, his valued segment with value 10.3 in Area I must be taken by at least two other agents to keep envy-freeness. There are only three candidates: $a_{32},p_2,c_1$, and the proof of lemma will finish after we show that agent $a_{32}$ cannot take cake in Area I (this then already implies $p_2$ has to take cake from Area I coupled with $c_1$).

  Assume negatively $a_{32}$ takes cake from Area I. He can take at most value 10.2 to avoid $a_{31}$ envying. In this case, his three valued segments in Area III must be shared by $c_3,c_4,a_{12},a_{22}$, and each of them must not get more than 10.2 by $(a_{32})'s$ evaluation. It can be easily checked that both $a_{12},a_{22}$ can get value strictly less than 1.9 by their own evaluation\footnote{If, say agent $a_{12}$, gets a value of 1.9 or more, then readers can try to construct such an allocation and easily realize that either the middle one of the three segments from $a_{32}$ must have some unallocated subset (which is not allowed as this subset is still valued by $a_{32}$), or agent $c_4$ has to take too much such that $a_{22}$ will envy him.}, and then their segments in Area II must be distributed to at least three other agents to maintain envy-freeness, which is clearly impossible due to the connected pieces constraint. So $a_{32}$ cannot take cake from Area I, and as discussed earlier, the lemma follows.
\end{proof}

Now, we show that both $p_1$ and $p_2$ can get almost all the three segments in Area II and Area V respectively, by misreporting their density functions. This will already imply the theorem:
\begin{enumerate}
  \item by the above lemma, at least one of $p_1,p_2$ will get value at most 1.1;
  \item this agent can lie to get value 3, which is $k$ in general.
\end{enumerate}

Without loss of generality, we see how agent $p_1$ can misreport. Let $\varepsilon$ be a very small positive real number.

$$f_{p_1}'(x)=\left\{
\begin{array}{ll}
  1 & \mbox{when }x\in[-89,-88]\cup[-82,-81]\cup[-75,-74]\\
  3-\varepsilon & \mbox{when }x\in[-105,-104]\cup[-93,-92]\\
  3-\varepsilon & \mbox{when }x\in[-71,-70]\cup[-60,-59]\cup[-49,-48]\cup[-38,-37]\\
  3-\varepsilon & \mbox{when }x\in[-35,-34]\cup[-24,-23]\cup[-13,-12]\cup[-2,-1]\\
  3-\varepsilon & \mbox{when }x\in[1,2]\cup[12,13]\cup[34,35]\\
  3-\varepsilon & \mbox{when }x\in[37,38]\cup[48,49]\cup[70,71]\\
  3-\varepsilon & \mbox{when }x\in[75,76]\cup[82,83]\cup[89,90]\\
  3-\varepsilon & \mbox{when }x\in[92,93]
\end{array}
\right.$$

\begin{figure}
    \centering
    \includegraphics{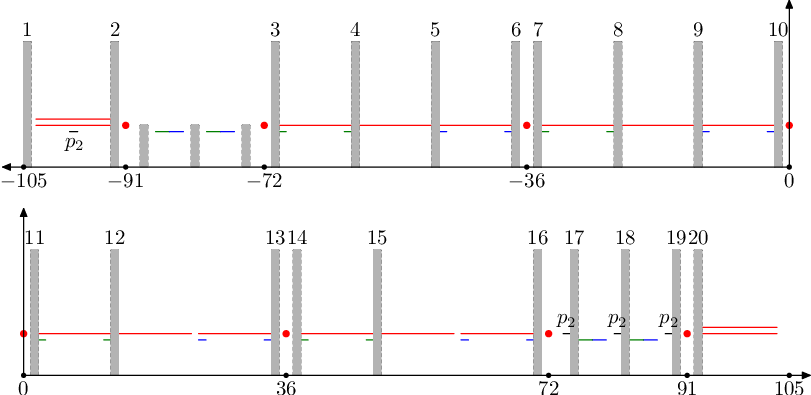}
    \caption{The Function that Agent $p_1$ Bids}\label{EC2}
\end{figure}

Figure \ref{EC2} illustrates how agent $p_1$ can lie. For the 4 segments he truly has value (three in Area II and one in Area VI), he only claims the three in Area II. Besides, he claims another 20 segments with height $3-\varepsilon$ ($k-\varepsilon$ in general) for each. As a result, $p_1$ now has 3 ``true" segments and 20 ``false" segments.

Now he has a total value of $3+20(3-\varepsilon)=21\times3-20\varepsilon$ on the whole cake, and there are 21 agents in this ``toy" example. We can see that by getting any of the 20 false segments which has value $3-\varepsilon$, it is not enough even for proportionality. Since the existence of agents $c_1$ and $c_2$ prevents $p_1$ getting a mixture of true and false segments, so $p_1$ will get a union of more than one of the 20 false segments, or almost all the 3 true segments. The latter case implies our claim already, so we only need to show the impossibility of the former case.

We label the 20 false segments as shown in Figure \ref{EC2}. We show pair by pair that $p_1$ cannot get any two adjacent segments of the twenty.

\begin{enumerate}
  \item 3 and 4 cannot be both taken, since more than half of $(a_{11})'s$ total value will be then taken by $p_1$, which is not envy-free. And the similar reason works for pairs 5 and 6, 7 and 8, 9 and 10, 11 and 12, 12 and 13, 14 and 15, 15 and 16.
  \item 6 and 7 cannot be both taken, since $(c_3)'s$ entire value is in between. This is also true for 2 and 3, 10 and 11, 13 and 14, 16 and 17, 19 and 20.
  \item 1 and 2 cannot be both taken. If agent $p_1$ takes 1 and 2 which worth $1.1$ by $(p_2)$'s evaluation, $p_2$ will have to take more than one segments in Area V. By Lemma \ref{theta_n_lemma}, $p_1$ must then take cake in Area VI, which contradicts to our assumption that $p_1$ takes 1 and 2.
  \item 4 and 5 cannot be both taken. Consider $a_{31}$. $p_1$ will occupy value 10 by taking 4 and 5 according to $a_{31}$. $a_{31}$ then must get a value at least 10 elsewhere. He cannot get it from Area III for otherwise he will take more than half of either $(a_{11})'s$ or $(a_{21})'s$ total value. If he take the one in Area I which worth $1.1$ according to $p_2$, then agent $p_2$ will take more than one segments in Area V, and by Lemma \ref{theta_n_lemma}, $p_1$ should take cake from Area VI, which contradicts to our assumption that $p_1$ takes 4 and 5 again.

      For the similar reason, $p_1$ cannot take pair 8 and 9.
  \item 17 and 18 cannot be both taken. If 17 and 18 are taken, the effect will be similar as if agent $p_2$ takes the first two segments in Area V. Following the analysis in the proof of Lemma \ref{theta_n_lemma}, we will also conclude that agent $p_1$ must take cake from Area VI, which is a contradiction again.

      For the similar reason, $p_1$ cannot take pair 18 and 19.
\end{enumerate}

Above enumerates all the possible pairs. Thus, $p_1$ has to take his true segments, and he needs to take almost 3 in order to maintain even proportionality. By misreporting, $p_1$ can get a value of almost $3$, and the same is for $p_2$. By the lemma, at least one of $p_1,p_2$ will get at most 1.1, so the theorem is proved for $k=3$.

For general $k$, $p_1$ and $p_2$ will have $k$ segments in Area II and Area V respectively. Correspondingly, instead of 4, there are $2(k-1)$ agents in each of Group 2, Group 3 and Group 4. They are
\begin{itemize}
  \item Group 2: $a_{11},a_{12},\ldots,a_{1(k-1)},b_{11},b_{12},\ldots,b_{1(k-1)}$;
  \item Group 3: $a_{21},a_{22},\ldots,a_{2(k-1)},b_{21},b_{22},\ldots,b_{2(k-1)}$;
  \item Group 4: $a_{31},a_{32},\ldots,a_{3(k-1)},b_{31},b_{32},\ldots,b_{3(k-1)}$.
\end{itemize}
It is easy to verify that Lemma \ref{theta_n_lemma} still holds for general $k$. As for agent $(p_1)$'s bid $f_{p_1}'(x)$, instead of $3-\varepsilon$ on 20 false segments, he will bid $k-\varepsilon$ on $8k-4$ segments at similar locations between the other agents. Then the rest of the analysis will be just the same.

\end{document}